\title{Sound Probabilistic \ssat with Projection}
\author{
\begin{tabular}{c@{\hspace{3em}}c@{\hspace{3em}}c}
  Vladimir Klebanov & Alexander Weigl & J\"org Weisbarth\\
    \texttt{\footnotesize klebanov@kit.edu} &
    \texttt{\footnotesize weigl@kit.edu} &
    \texttt{\footnotesize joerg.bretten@web.de}
\end{tabular}\\[-2ex]
  \institute{Institute for Theoretical Informatics\\
    Karlsruhe Institute of Technology\\
    Germany}
}
\newcommand{\dfn}[1]{\emph{#1}}
\renewcommand{\vec}[1]{\ensuremath{\bar{#1}}}
\newcommand{\Z}{\ensuremath{\mathbb Z}}
\newcommand{\N}{\ensuremath{\mathbb N}}
\newcommand{\Fml}{\ensuremath{\text{Fml}}} 
\newcommand{\Md}[2]{\ensuremath{\mathop{models}_{#1}({#2})}} 
\newcommand{\nsimeq}{\not\simeq}
\newcommand{\pivot}{\ensuremath{\mathit{pivot}}}
\renewcommand{\phi}{\varphi}
\newcommand\restr[2]{{
  \left.\kern-\nulldelimiterspace 
  #1 
  \vphantom{\big|} 
  \right|_{#2} 
  }}
\newcommand{\ssat}{%
{\settoheight{\dimen0}{S}\resizebox{!}{\dimen0}{\raisebox{\depth}{\#}}SAT}\xspace}
\newcommand{\bssat}{\textsc{Bounded\#SAT}}
\newcommand{\core}{\textsc{Core}}
\newcommand{\main}{\textsc{Main}}
\newcommand{\origmc}{\textsc{ApproxMC}\xspace}
\newcommand{\ourmc}{\textsc{ApproxMC-p}\xspace}
\newcommand{\model}{\textsc{SAT}}
\newcommand{\vocab}[1]{\ensuremath{\mathit{vocab}(#1)}}
\newcommand{\phidelta}{\ensuremath{\restr{\phi}{\Delta}}}
\newcommand{\syntrue}{\ensuremath{\mathit{true}}}
\newcommand{\semtrue}{\ensuremath{\textsc{True}}}
\newcommand{\semfalse}{\ensuremath{\textsc{False}}}
\newcommand{\dsharp}{\textsc{Dsharp}\xspace}
\newcommand{\sharpsat}{\textsc{sharpSAT}\xspace}
\newcommand{\sharpcdcl}{\textsc{sharp\-CDCL}\xspace}
\newcommand{\cryptominisativ}{\textsc{Cryptomini\-sat4}\xspace}
\newcommand{\cryptominisatv}{\textsc{Cryptomini\-sat5}\xspace}
\newtheorem{theorem}{Theorem}[section]
\newtheorem{definition}[theorem]{Definition}
\newtheorem{construction}[theorem]{Construction}
\newtheorem{lemma}[theorem]{Lemma}
\newtheorem{corollary}[theorem]{Corollary}
\newtheorem{note}[theorem]{Note}
\newcommand{\ED}{\ensuremath{(\epsilon,\delta)}\xspace}
\begin{document}
\maketitle

\begin{abstract}
  We present an improved method for a sound probabilistic estimation of
  the model count of a boolean formula under projection. The problem
  solved can be used to encode a variety of quantitative program
  analyses, such as concerning security of resource consumption. We
  implement the technique and discuss its application to quantifying
  information flow in programs.
\end{abstract}

\section{Introduction}

The \ssat problem is concerned with counting the number of models of a
boolean formula. Since \ssat is a computationally difficult problem, not
only exact but also approximative solutions are of interest. A powerful
approximation method is probabilistic  approximation,  making use of
random sampling. We call a probabilistic approximation method
\emph{sound} when the probability and magnitude of the sampling-related
error can be bounded a~priory. A prominent recent development in sound
probabilistic \ssat is \origmc~\cite{ChakrabortyMV13}.

In this paper we present \ourmc, an improved method and a tool for sound
probabilistic \ssat with projection. Just as \origmc, which it
enhances, \ourmc belongs to the category of \ED~counters. In this
context, the parameter~$\epsilon$ represents the \emph{tolerance} and
$1-\delta$ the confidence of the result. For example, choosing
$\epsilon=0.1$ and $\delta=0.14$ implies that the computed result
provably lies with a probability of~86\% between the $0.9$-fold and
the~$1.1$-fold of the correct result. Both parameters can be configured
by the user.

An enticing application of \ssat solvers is quantitative program
analysis, which often requires establishing cardinality of sets defined
in terms of the set of reachable program states. Reducing an analysis to
\ssat has the advantage that one can use a variety of established
reasoning techniques. At the same time, it is reasonably easy to
represent behavior of intricate low-level programs in boolean logic.
Yet, \ssat alone is typically not sufficient for this purpose---one
needs a way to reason about program reachability. In logic, this
reasoning corresponds to projection. If a boolean formula encodes a
relation (e.g., a transition relation on states), then  computing the
image or the preimage of the relation is a projection operation.

\ourmc takes as input a boolean formula in conjunctive normal form
together with a projection scope (a set of variables)  and the
parameters~$\epsilon$~and~$\delta$ and estimates the number of models of
the formula projected on the given scope. Of course, by setting the
scope to encompass all variables in the logical signature, one can use
\ourmc as a non-projecting \ssat solver.

%

The specific contributions of this paper are the following:

First, we materially improve the performance of \origmc, in particular
its base confidence. Probabilistic counters meet confidence
demands above the base confidence by repeating the estimation. We reduce
the number of repetitions for confidence values above~0.6 by about an
order of magnitude on average. Furthermore, we reduce, for one
repetition,  the number of SAT solver queries by at least~$15\%$. A
detailed comparison is presented in Section~\ref{sec:modif}.

Second, we combine probabilistic model counting with projection, even
though the ideas behind this combination are not
completely new. A particular special case has previously appeared
in~\cite{Chakraborty14} in the context of uniform model sampling. There,
the formula~$\phi$ is treated by considering only its projection
on the \emph{independent support}. An independent support  is a subset
of variables that uniquely determines the truth value of the whole
formula. It is often known from the application domain. For formulas
generated from deterministic programs, for instance, the independent
support is the preimage of the transition relation. Our approach is
more general in that we explicitly consider projection on arbitrary
scopes.

Third, we implement the method and map its pragmatics. We show that
\ourmc is effective for large formulas with a large number of models,
which may make other \ssat tools run out of time or memory. Finally, we
discuss applications of \ourmc to quantifying information flow in
programs.

\subsection{Logical Foundations}

We assume that logical formulas are built from usual logical connectives
($\wedge$, $\vee$, $\neg$, etc.) and  propositional variables from some
vocabulary set~$\Sigma$.   A \dfn{model} is a map assigning every
variable in~$\Sigma$ a truth value. A given model~$M$ can be
homomorphically extended to give a truth value to a formula~$\phi$
according to standard rules for logical connectives. We call a model~$M$
a \dfn{model of}~$\phi$, if~$M$ assigns~$\phi$ the value \semtrue.  A
formula~$\phi$ is \dfn{satisfiable} if it has at least one model, and
\dfn{unsatisfiable} otherwise.

In the following, we assume that $\Sigma$ and $\Delta$ are vocabularies
with $\Delta\subseteq\Sigma$. A $\Sigma$-entity (i.e., formula or model)
is an entity defined (only) over vocabulary from~$\Sigma$. We assume
that $\phi$ denotes a $\Sigma$-formula and $M$~a $\Sigma$-model.
With $\vocab{\phi}$ we denote the vocabulary actually appearing in~$\phi$.

With $\Md{\Sigma}{\phi}$ we denote the set of all models
of~$\phi$. If $\phi$ is unsatisfiable, the result is the empty
set~$\emptyset$.
With $|\phi|$ we denote the number of models of the formula~$\phi$
(i.e., $|\phi|=|\Md{\Sigma}{\phi}|$).
With $\restr{M}{\Delta}$ we denote the $\Delta$-model that coincides with the
$\Sigma$-model~$M$ on the vocabulary~$\Delta$.

With $\restr{\phi}{\Delta}$ we denote the projection of~$\phi$
on~$\Delta$, i.e., the strongest $\Delta$-formula that, when interpreted
as a $\Sigma$-formula, is entailed by~$\phi$. The projected
formula~$\restr{\phi}{\Delta}$ says the same things about~$\Delta$
as~$\phi$ does---but nothing else.  Projection of~$\phi$ on~$\Delta$ can
be seen as quantifying the $\Sigma\setminus\Delta$-variables in~$\phi$
existentially and then eliminating the quantifier (i.e., computing an
equivalent formula without it). Furthermore,
$\Md{\Delta}{\restr{\phi}{\Delta}}=\{\restr{M}{\Delta}\mid
M\in\Md{\Sigma}{\phi}\}$.

\subsection{Related Work}


A number of \textbf{exact boolean model counters} exist. Counters such as
\dsharp~\cite{Muise12} and \sharpsat~\cite{Thurley06} are based on
compiling the formula to the Deterministic Decomposable Negation Normal
Form (d-DNNF). They are geared toward formulas with a large number of
models but tend to run out of memory as formula size increases. An
extension of the above with projection has been presented
in~\cite{KlebanovMM13}.
Another class of exact counters implements variations of the blocking clause
approach (cf.~Section~\ref{sec:ssat}) and includes tools such as
\sharpcdcl~\cite{KlebanovMM13} and CLASP~\cite{gebser2007clasp}. The
counters in this class are often already projection-capable. They can
deal with very large formulas (hundreds of megabyte in DIMACS format)
but are challenged by large model counts.


The \textbf{probabilistic counters} can be divided into three classes.
The first class are the already mentioned \ED~counters. These counters
  were originally introduced by Karp and Luby~\cite{karp1989monte} to
  count the models of DNF formulas. They guarantee with a probability of
  at least~$1-\delta$ that the result will be between $1-\epsilon$ and
  $1+\epsilon$ times the actual number of models. An instantiation of this
  class for CNF formulas is \origmc~\cite{ChakrabortyMV13}.

  For reasons unknown to us, \origmc\ deviates from the original
  definition~\cite{karp1989monte} of an \ED~counter by defining the
  tolerable result interval as $\left[|\phi|/(1 + \epsilon),
  |\phi|\cdot(1 + \epsilon)\right]$. We {adhere to the original
  definition} of the tolerable interval, that is $\left[|\phi|\cdot(1 -
  \epsilon), |\phi|\cdot(1 + \epsilon)\right]$. We discuss the
  differences between \origmc and our work in detail in Section~\ref{sec:modif}.

The second class are lower/upper bounding counters. These counters drop
  the tolerance guarantee and compute an upper/lower bound for the
  number of models that is correct with a probability of at least
  $1-\delta$ (for a user-specified~$\delta$). Examples are
  \textsc{BPCount}~\cite{kroc2008leveraging},
  \textsc{MiniCount}~\cite{kroc2008leveraging},
  \textsc{MBound}~\cite{Gomes2006} and
  \textsc{Hybrid-MBound}~\cite{Gomes2006}.

The third class are guarantee-less counters. These counters provide no formal
  guarantees but can be very good in practice. Examples are
  \textsc{ApproxCount}~\cite{wei2005new},
  \textsc{SearchTreeSampler}~\cite{ermon2012uniform},
  \textsc{SE}~\cite{rubinstein2013stochastic} and
  \textsc{SampleSearch}~\cite{gogate2011samplesearch}.


\section{Method}
\label{sec:approxmc}

\subsection{The Idea}
\label{sec:approxmc-functionality}

The intuitive idea behind \ourmc\ is to partition the set
$\Md{\Sigma}{\phidelta}$ into buckets so that each bucket contains
roughly the same number of models. The partitioning is based on strongly
universal hashing and is, surprisingly, attainable with high probability
without any knowledge about the structure of the set of models. The
count of $\Md{\Sigma}{\phidelta}$ can then be estimated as the count of
models in one bucket multiplied with the number of buckets.

More technically, \ourmc\ is based on Chernoff-Hoeffding bounds, one of
the so called \emph{concentration inequalities}. This theorem
(Theorem~\ref{theorem:1}) limits the probability that the sum of random
variables -- under certain side conditions -- deviates from its expected
value. To apply the theorem we make use of a trick common in counting,
namely that set cardinality can be expressed as the sum of the
membership indicator function over the domain.

Assume that we fixed the set of buckets~$B$,
a way of distributing models of~$\phi$ into buckets by
means of a hash function~$h$, and distinguished one particular bucket.
We now associate each model  with an indicator variable (a random
variable over the hash function~$h$) that is $1$ iff the model is within
the distinguished bucket and $0$ otherwise. Hence, for a given
hash function~$h$, the sum of all those indicator variables is exactly
the amount of models within the distinguished bucket. We will determine
this value by means of a deterministic model counting procedure \bssat\
(Section~\ref{sec:ssat}).

On the other hand, the \emph{expected} number of models in the bucket
when choosing $h$ randomly from the class of  strongly $r$-universal
hash functions (Section~\ref{sec:runiv}) is
${|\Md{\Sigma}{\phidelta}|}/{B}$. The Chernoff-Hoeffding theorem tells
us that the measured and the expected values are probably close and
allows us to estimate~$|\Md{\Sigma}{\phidelta}|$.

We first explain how to build an $(\epsilon,\delta)$ counter this way
for a fixed confidence $1-\delta\approx 0.86$ (Section~\ref{sec:fixed}).
Then, we generalize this result to arbitrary higher confidences
(Section~\ref{sec:amplif}).


The idea behind adding projection capability is to hash partial  models
(i.e., models restricted to the projection scope) and to use a
projection-capable version of \bssat. To separate concerns, we postpone
discussing projection until Section~\ref{sec:proj}. The algorithms we
present in the following are capable of projection, but we begin with a
tacit assumption that they are always invoked with the value of the
scope~$\Delta=\Sigma$, i.e., in a non-projecting fashion. We will show
that this assumption is superfluous in Section~\ref{sec:proj}.

\subsection{Strongly \texorpdfstring{$r$}{r}-Universal (aka
\texorpdfstring{$r$}{r}-Wise Independent) Hash Functions}
\label{sec:runiv}

The key to distributing models of a formula into a number of buckets
filled roughly equally is to apply strongly $r$-universal
hashing~\cite{Wegman81} (also known as $r$-wise independent or, simply,
$r$-independent hashing). Every concrete strongly $r$-universal hash
function depends on a parameter. By choosing the parameter at random,
one can make a good distribution of values into hash buckets likely,
even when the keys are under adversarial control.

\begin{definition}[strongly \texorpdfstring{$r$}{r}-universal hash functions~\cite{Wegman81}]
Let~$K$ be some universe from which the keys to be hashed are drawn, and
$B$~a set of buckets (hash values). A family of hash functions $H=\{h\colon K\to
B\}$ is \dfn{strongly $r$-universal}, iff for any $h\in H$ chosen
uniformly at random, the hash values of any $r$-tuple of distinct keys
$(k_1,\ldots,k_r)\in K^r$ are independent random variables, i.e., for
any $r$-tuple of (not necessarily distinct) values $(v_1,\ldots,v_r)\in
B^r$
\[
\Pr_{h\in H}\big[h(k_1)=v_1\wedge\ldots\wedge h(k_r)=v_r\big]=
\left(\frac{1}{|B|}\right)^r\enspace.
\]
\end{definition}

In the following, we are interested in families of strongly
$r$-universal hash functions with $K=\Z_2^n$ and $B=\Z_2^m$. We denote
any such family as~$H(n,m,r)$. Functions~$h\in H(n, m, r)$ can be used
to distribute models with~$n$ variables into~$2^m$ buckets. While we
keep the rest of the presentation generic,  our implementation resorts
to a particular family $H_\mathit{xor}(n, m, 3)$ of such functions
with $r=3$. Any discussion of concrete values refers to this family and
the corresponding degree of strong universality. Note that the
construction operates on $\Z_2\cong\{0,1\}$, and that addition on~$\Z_2$
corresponds to exclusive or (boolean XOR), while multiplication
on~$\Z_2$ corresponds to conjunction (boolean AND). Otherwise, the usual
matrix and vector arithmetic rules apply.

\begin{construction}[Hash function family $H_\mathit{xor}(n, m, 3)$]
Let $n$ and $m$ be arbitrary natural numbers.
Any $m\cdot(n + 1)$ values $b_{1;0}, \ldots, b_{m;n} \in \Z_2$
define a \emph{hash function} $h\colon\Z_2^n\to\Z_2^m$ by
%
%
\[
\vec{z} \mapsto
\begin{pmatrix}
  b_{1;0} \\
  \vdots \\
  b_{m; 0}
\end{pmatrix}
\oplus
\begin{pmatrix}
  b_{1; 1} & \cdots & b_{1; n} \\
  \vdots & \ddots & \vdots \\
  b_{m; 1} & \cdots & b_{m; n}
\end{pmatrix}
\otimes
\begin{pmatrix}
  z_1 \\
  \vdots \\
  z_n
\end{pmatrix}\enspace.
\]
We denote the class of all such hash functions as  $H_\mathit{xor}(n, m,
3)$.
\end{construction}
\begin{theorem}[\cite{GomesSS06}]
The hash function class $H_{\mathit{xor}}(n, m, 3)$ is
strongly $3$-universal.
\end{theorem}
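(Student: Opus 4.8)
The plan is to verify the defining condition of strong $3$-universality directly for the family $H_{\mathit{xor}}(n,m,3)$, exploiting the linear-algebraic structure of the construction over $\Z_2$. Fix three distinct keys $\vec{z}^{(1)}, \vec{z}^{(2)}, \vec{z}^{(3)} \in \Z_2^n$ and three (not necessarily distinct) target values $\vec{v}^{(1)}, \vec{v}^{(2)}, \vec{v}^{(3)} \in \Z_2^m$. Writing the random hash function as $h(\vec{z}) = \vec{b}_0 \oplus A\otimes\vec{z}$ with $\vec{b}_0$ the random offset column and $A$ the random $m\times n$ matrix with entries $b_{i;j}$, all $m(n+1)$ of these bits chosen independently and uniformly, I want to show $\Pr[h(\vec{z}^{(k)}) = \vec{v}^{(k)}\text{ for }k=1,2,3] = (1/2^m)^3$.

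First I would observe that the $m$ output coordinates are stochastically independent and identically distributed: coordinate $i$ of $h(\vec{z})$ depends only on the bits $b_{i;0},\ldots,b_{i;n}$, and these blocks are disjoint across $i$. Hence it suffices to prove the $m=1$ case with probability $(1/2)^3 = 1/8$, i.e. to show that for a single random affine functional $f(\vec z) = b_0 \oplus \langle \vec b, \vec z\rangle$ (with $b_0,b_1,\ldots,b_n$ i.i.d.\ uniform bits) the triple $(f(\vec z^{(1)}), f(\vec z^{(2)}), f(\vec z^{(3)}))$ is uniform on $\Z_2^3$. Equivalently, the map sending the coefficient vector $(b_0,\vec b)\in\Z_2^{n+1}$ to $(f(\vec z^{(1)}),f(\vec z^{(2)}),f(\vec z^{(3)}))\in\Z_2^3$ is $\Z_2$-linear, so its image is uniform iff this linear map is surjective; surjectivity in turn is equivalent to the three rows $(1,\vec z^{(1)}), (1,\vec z^{(2)}), (1,\vec z^{(3)})$ of the associated matrix being linearly independent over $\Z_2$.

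The crux is therefore the purely combinatorial claim: for \emph{any} three distinct vectors $\vec z^{(1)},\vec z^{(2)},\vec z^{(3)}\in\Z_2^n$, the augmented vectors $(1,\vec z^{(1)}),(1,\vec z^{(2)}),(1,\vec z^{(3)})$ are linearly independent in $\Z_2^{n+1}$. Over $\Z_2$ a nontrivial linear dependence among three vectors means either one of them is zero, or two of them are equal, or all three sum to zero; the leading $1$ in each augmented vector rules out the zero vector, distinctness of the $\vec z^{(k)}$ rules out two being equal, and summing all three augmented vectors gives a first coordinate $1\oplus1\oplus1 = 1\neq 0$, ruling out the last case. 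This is the step I expect to carry the real content, though as shown it is short; the rest is bookkeeping. Once independence is established, the linear map above is surjective, each of the $2^3$ outputs has exactly $2^{n+1}/2^3 = 2^{n-2}$ preimages, so each occurs with probability $1/8$, and multiplying over the $m$ independent coordinates yields $(1/2^m)^3$, which is exactly the defining identity for strong $3$-universality. (If one wanted to cite rather than reprove, this is precisely the argument from~\cite{GomesSS06}, so invoking that theorem is also legitimate, but the direct proof is self-contained and instructive.)
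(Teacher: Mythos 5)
Your proof is correct. Note that the paper itself offers no proof of this statement at all --- it is imported wholesale by citation to the reference given in the theorem header --- so there is nothing to compare step-by-step; what you have written is a complete, self-contained replacement for that citation. Your argument is the standard one and each step checks out: the $m$ output coordinates use disjoint blocks of the random bits $b_{i;0},\ldots,b_{i;n}$, so they are independent and the problem reduces to $m=1$; folding the affine offset $b_0$ into a linear map by augmenting each key to $(1,\vec{z})$ is exactly the right move, since without the offset the argument would fail (e.g.\ for the key $\vec{z}=0$ or for keys satisfying $\vec{z}^{(1)}\oplus\vec{z}^{(2)}=\vec{z}^{(3)}$); and your case analysis of nontrivial $\Z_2$-dependences among three vectors (one zero, two equal, or all three summing to zero) is exhaustive and each case is correctly excluded by the leading $1$'s and the distinctness hypothesis. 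Two cosmetic remarks only: the claim implicitly assumes $n\geq 2$ so that three distinct keys exist, which is harmless because otherwise the universality condition is vacuous; and it is worth observing that your dependence analysis is precisely where $r=3$ is used --- four augmented vectors $(1,\vec{z}^{(k)})$ can be dependent (four leading $1$'s sum to $0$), so the family is not strongly $4$-universal, which explains why the paper fixes $r=3$ throughout.
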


By sampling $b_{1;0}, \ldots, b_{m;n}$ uniformly from~$\Z_2$, we can
sample uniformly from~$H_{\mathit{xor}}(n, m, 3)$.

\begin{construction}\label{cons:select}
Let $h\in H_{\mathit{xor}}(n, m, 3)$ be a hash function. Fixing its
output induces a predicate on its inputs. In this paper, we will
consider for each~$h$, the predicate
\(
h(\vec{z})=1^m
\).
This semantical predicate can be represented syntactically as a formula
of propositional logic built from $m$~XOR clauses:
\begin{equation}\label{eq:select}
\big(b_{1;0}\oplus (b_{1; 1}\wedge z_1)\oplus\ldots\oplus (b_{1; n}\wedge
z_n)\big)
\wedge\ldots\wedge
\big(b_{m;0}\oplus (b_{m; 1}\wedge z_1)\oplus\ldots\oplus (b_{m; n}\wedge
z_n)\big)\enspace.
\end{equation}
\end{construction}

\begin{construction}\label{cons:phih}
Given a hash function~$h$, we use the notation~$\phi_h$ to denote the
conjunction of a formula~$\phi$ with the clause
representation~\eqref{eq:select} of the predicate induced by~$h$.
\end{construction}

For presentation purposes, we assume w.l.o.g.\ that
$\Sigma=\{z_1,\ldots,z_n\}$ and $\Delta=\{z_1,\ldots,z_{|\Delta|}\}$ in
the following. Our implementation does not have this limitation.

\pagebreak[4]
\subsection{Helper Algorithm \bssat: Iterative Model Enumeration}
\label{sec:ssat}

\begin{wrapfigure}{r}{0.43\textwidth}
\vspace{0pt}
\begin{algorithm}[H]
  \caption{\rlap{\bssat($ \phi, \Delta, n $)}}
  \label{alg:projmodenum}
  \SetAlgoLined

  $ k \leftarrow 0 $ \;
  $ M \leftarrow \model(\phi) $ \;
  \While{$ (M \neq \bot)\wedge(k<n) $}{
    $ k \leftarrow k +1 $ \;
    $ \phi \leftarrow \phi \wedge (\Delta \nsimeq \restr{M}{\Delta}) $ \;
    $ M \leftarrow \model(\phi) $ \;
  }
  \Return{$k$}
\end{algorithm}
\end{wrapfigure}

To enumerate the models in a single bucket we are using the well-known
algorithm \bssat\ (Algorithm~\ref{alg:projmodenum}). Given a
formula~$\phi$, a projection scope $\Delta \subseteq \Sigma$, and a
bound $n\geqslant0$, the algorithm  enumerates up to~$n$ models of
$\restr{\phi}{\Delta}$, i.e., it returns
$\mathop{min}(|\restr{\phi}{\Delta}|, n)$. The algorithm  makes use of
the oracle $\model(\cdot)$, which for a CNF formula returns either a
model  or~$\bot$, in case none exists. \bssat\ works by repeatedly
asking the oracle for a model~$M$ of~$\phi$, and extending~$\phi$ with a
\emph{blocking clause} $(\Delta \nsimeq \restr{M}{\Delta})$ ensuring
that any model found later must differ in at least one
$\Delta$-variable.  The formula $\Delta\not\simeq\restr{M}{\Delta}$ can
be constructed as $\bigvee_{v\in\Delta}\mathit{flip}(v,M)$, where
$\mathit{flip}(v,M)=v$, if $M(v)=\semfalse$, and
$\mathit{flip}(v,M)=\neg v$, if $M(v)=\semtrue$. The algorithm is
widely-known as part of the automated deduction lore. We have reported
on our experiences with using it for quantitative information flow analysis
in~\cite{KlebanovMM13}.


\subsection{Counting with a Fixed Confidence of 86\%}
\label{sec:fixed}

\begin{algorithm}
  \label{alg:approxmccore}
  \caption{\core($r$, $\phi$, $\Delta$, $\epsilon$)}
  \SetAlgoLined

  $ n := |\Delta| $ \;
  $ \pivot := \left\lceil \frac{2 \cdot r \cdot (1 + \epsilon)
      \cdot \sqrt[3]{e}}{\epsilon^2} \right\rceil $ \;
  \label{alg:approxmc_pivot}
  $ m \leftarrow 0 $ \;
  \label{alg:approxmccore_m_init}
  \Repeat{$ c \leq \pivot\label{l:repii}
    \; \vee \; m > \left\lceil \log_2 \left(
        \frac{(1 + \epsilon)\cdot 2^n}{\pivot} \right) \right\rceil$\label{exitcond}}{\label{l:repi}
    $ m \leftarrow m + 1 $ \;

    $ h \overset{\text{random}}{\leftarrow} H(n, m, r) $ \;
    $ c \leftarrow $ \bssat($ \phi_h,\Delta, \pivot + 1 $)
    \tcp*{\smash{$c = \min(|\restr{\phi_h}{\Delta}|, \pivot + 1) $}}
    \label{alg:approxmccore_c_assignment}
  }
  \label{alg:approxmccore_loopcond}

  \Return{$c \cdot 2^m$}
  \tcp*{Count of one bucket times number of buckets}
  \label{alg:approxmccore_return}
\end{algorithm}

We first explain how to build an $(\epsilon,\delta)$ counter for a fixed
confidence $1-\delta\approx 0.86$: the algorithm \core. The number stems
from the probability of deviation $e^{\lfloor -r / 2 \rfloor}\approx
0.14$ for $r=3$ in the following theorem.

\begin{theorem}[Chernoff-Hoeffding bounds with limited
independence~\cite{schmidt1995chernoff}]
  \label{theorem:1}
  If $ \Gamma $ is the sum of $r$-wise independent random variables, each of which is confined to the
  interval $ [0, 1] $, then the following holds for $ \mu := E[\Gamma] $ and for every $ \epsilon
  \in [0, 1] $: If $ r \leq \lfloor \epsilon^2 \cdot \mu / \sqrt[3]{e} \rfloor $ then $ \Pr\left[
    |\Gamma - \mu| \geq \epsilon \cdot \mu \right] \leq e^{\lfloor -r / 2 \rfloor} $.
\end{theorem}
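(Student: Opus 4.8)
\medskip\noindent\emph{Proof proposal.} The plan is to use the \emph{method of moments}, the standard surrogate for the moment generating function $E[e^{t\Gamma}]$ when full independence is unavailable: one controls the deviation through a single central moment of order at most $r$, which, as observed below, is insensitive to the difference between $r$-wise and full independence. Write $X_1,\dots,X_n$ for the $r$-wise independent summands, $p_i:=E[X_i]$, $Y_i:=X_i-p_i$, $Y:=\Gamma-\mu=\sum_iY_i$, and let $\rho$ be the largest even integer with $\rho\leq r$. Since $\rho$ is even, Markov's inequality applied to the nonnegative variable $Y^{\rho}$ gives
\[
\Pr\big[\,|\Gamma-\mu|\geq\epsilon\mu\,\big]=\Pr\big[\,Y^{\rho}\geq(\epsilon\mu)^{\rho}\,\big]\leq\frac{E[Y^{\rho}]}{(\epsilon\mu)^{\rho}},
\]
so the whole problem reduces to an upper bound on $E[Y^{\rho}]$.

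First I would note that $E[Y^{\rho}]$ equals its value under genuine independence. Expanding $Y^{\rho}=\big(\sum_iY_i\big)^{\rho}$ multinomially produces a sum of monomials $\prod_jY_{i_j}$, each containing at most $\rho\leq r$ distinct indices; the expectation of such a monomial factorizes over its distinct indices, and $r$-wise independence is exactly what licenses that factorization. So from here on I may treat the $X_i$ as independent. Moreover, any monomial in which some index occurs with multiplicity one carries a factor $E[Y_i]=0$ and vanishes, so the surviving terms are indexed by partitions of the $\rho$ positions into $b\leq\rho/2$ blocks, each of size $\geq 2$.

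The crux — and the step I expect to be the main obstacle — is the moment estimate itself. Using $\operatorname{Var}(X_i)=E[X_i^2]-p_i^2\leq E[X_i]=p_i$ (valid because $X_i^2\leq X_i$ on $[0,1]$), hence $\sum_i\operatorname{Var}(X_i)\leq\mu$, together with $|E[Y_i^m]|\leq E[Y_i^2]$ for $m\geq 2$ (since $|Y_i|\leq 1$), each surviving partition into $b$ blocks contributes at most $\mu^{b}$, so $E[Y^{\rho}]\leq\sum_{b=1}^{\rho/2}N_{\rho,b}\,\mu^{b}$, where $N_{\rho,b}$ is the number of such partitions and $N_{\rho,\rho/2}=(\rho-1)!!$ counts the perfect matchings. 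Estimating $(\rho-1)!!$ by Stirling and using $\mu\geq\rho\sqrt[3]{e}$ (which follows from the hypothesis and $\epsilon\leq 1$) to absorb the subleading powers of $\mu$ yields, in the regime $\rho\leq r<\mu$ at hand, a bound of the form $E[Y^{\rho}]\leq\big(\rho/e^{2/3}\big)^{\rho/2}\mu^{\rho/2}$; getting the constant down to exactly $e^{2/3}$, rather than a mere $O(\rho)^{\rho/2}$, is the delicate part. Substituting into Markov and invoking the hypothesis $r\leq\lfloor\epsilon^2\mu/\sqrt[3]{e}\rfloor$ in the form $\epsilon^2\mu\geq r\sqrt[3]{e}\geq\rho\sqrt[3]{e}$, the bound collapses:
\[
\Pr\big[\,|\Gamma-\mu|\geq\epsilon\mu\,\big]\leq\frac{(\rho/e^{2/3})^{\rho/2}\mu^{\rho/2}}{(\epsilon\mu)^{\rho}}=\Big(\frac{\rho}{e^{2/3}\,\epsilon^2\mu}\Big)^{\rho/2}\leq\Big(\frac{\rho}{e^{2/3}\cdot\rho\,e^{1/3}}\Big)^{\rho/2}=e^{-\rho/2}.
\]

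When $r$ is even we have $\rho=r$ and this is exactly $e^{\lfloor-r/2\rfloor}$, so we are done. When $r$ is odd the argument so far uses only the $(r-1)$st moment and falls a little short of the claimed $e^{\lfloor-r/2\rfloor}=e^{-(r+1)/2}$; the missing factor, of order $e^{-1/2}$, has to come from also exploiting the top, $r$th moment. The way I would recover it is to replace the two-sided Markov step by separate one-sided estimates of $\Pr[\Gamma\geq(1+\epsilon)\mu]$ and $\Pr[\Gamma\leq(1-\epsilon)\mu]$ using degree-$r$ polynomial certificates required to be nonnegative only on the support $[0,\infty)$ of $\Gamma$ — a strictly larger family than the polynomials nonnegative on all of $\mathbb{R}$, by the representation $q(x)^2+x\,\tilde q(x)^2$ — whose expectations are still combinations of moments of order $\leq r$ and are therefore computable under $r$-wise independence. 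Designing and optimizing these certificates, together with fixing the constant in the moment estimate above, is where essentially all the substance of the theorem lies; the rest is bookkeeping.
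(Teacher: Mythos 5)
The paper does not actually prove this statement: Theorem~\ref{theorem:1} is imported verbatim from the cited reference~\cite{schmidt1995chernoff} and used as a black box, so there is no in-paper proof to compare yours against. Your even-moment argument --- Markov's inequality applied to $Y^{\rho}$ for the largest even $\rho\leq r$, factorization of mixed moments licensed by $r$-wise independence, and reduction to partitions with all blocks of size at least two --- is indeed the standard route to this result and is essentially the proof of Schmidt, Siegel and Srinivasan. Two caveats, however, one of which is serious.

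First, the entire analytic content of the theorem sits in the central-moment estimate $E[Y^{\rho}]\leq(\rho/e^{2/3})^{\rho/2}\mu^{\rho/2}$, which you state but do not prove; ``Stirling plus absorbing subleading terms'' is a plan, not a proof, and you flag it as the delicate part yourself. Second, and more importantly, your argument delivers $e^{-\rho/2}=e^{-\lfloor r/2\rfloor}$, which for odd $r$ is weaker than the claimed $e^{\lfloor -r/2\rfloor}=e^{-(r+1)/2}$ by a factor of $e$ (not $e^{-1/2}$ as you write: $e^{-(r+1)/2}/e^{-(r-1)/2}=e^{-1}$). This is not a corner case: the paper instantiates the theorem exclusively at $r=3$, where the two exponents give failure probability $e^{-1}\approx 0.37$ versus $e^{-2}\approx 0.14$, i.e., a base confidence of $0.63$ versus the advertised $0.86$, and the values of $t$ in Table~\ref{tab:t} depend on which one holds. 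Your proposed repair --- one-sided degree-$r$ polynomial certificates nonnegative on the half-line --- is only a gesture: nothing is constructed or optimized, and no reason is given why such certificates should recover exactly the missing factor of $e$. Note finally that the bound your method does produce, $e^{-\lfloor r/2\rfloor}$, is the form in which the Schmidt--Siegel--Srinivasan theorem is usually stated; so before investing effort in closing this gap you should check whether the exponent $\lfloor -r/2\rfloor$ in the statement above is a faithful transcription of the cited result, since your difficulty in reaching it may reflect an error in the statement rather than in your proof.
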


\begin{corollary}
  \label{corollary:1}
    If $ r \leq \lfloor \epsilon^2 \cdot \mu / \sqrt[3]{e} \rfloor $, then
    $ \Pr\left[(1 - \epsilon) \cdot \mu \leq \Gamma \leq
              (1 + \epsilon) \cdot \mu \right] \geq
      1 - e^{\lfloor -r / 2 \rfloor} $.
\end{corollary}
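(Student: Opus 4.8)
The plan is to obtain the corollary as an immediate consequence of Theorem~\ref{theorem:1} by rewriting the two-sided interval bound as the complement of a deviation event. First I would note that, for any fixed real $\mu$ and any $\epsilon$, the double inequality $(1-\epsilon)\cdot\mu \leq \Gamma \leq (1+\epsilon)\cdot\mu$ is equivalent, after subtracting $\mu$ throughout, to $-\epsilon\cdot\mu \leq \Gamma-\mu \leq \epsilon\cdot\mu$, i.e.\ to $|\Gamma-\mu| \leq \epsilon\cdot\mu$. Hence the event whose probability the corollary wants to lower-bound is exactly $\{|\Gamma-\mu|\leq\epsilon\cdot\mu\}$.

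Second, I would pass to complements: $\Pr[|\Gamma-\mu|\leq\epsilon\cdot\mu] = 1-\Pr[|\Gamma-\mu|>\epsilon\cdot\mu] \geq 1-\Pr[|\Gamma-\mu|\geq\epsilon\cdot\mu]$, where the last step uses that the strict-inequality event is contained in the non-strict one. The hypothesis $r \leq \lfloor\epsilon^2\cdot\mu/\sqrt[3]{e}\rfloor$ is verbatim the side condition of Theorem~\ref{theorem:1} (and $\epsilon\in[0,1]$ is inherited from the setting in which the corollary is applied), so Theorem~\ref{theorem:1} applies and gives $\Pr[|\Gamma-\mu|\geq\epsilon\cdot\mu] \leq e^{\lfloor -r/2\rfloor}$. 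Chaining the two estimates yields $\Pr[(1-\epsilon)\cdot\mu \leq \Gamma \leq (1+\epsilon)\cdot\mu] \geq 1-e^{\lfloor -r/2\rfloor}$, which is the claim.

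There is no genuine obstacle here — the corollary is just the concentration bound of Theorem~\ref{theorem:1} repackaged as a confidence statement, the form in which \core\ consumes it. The one point I would be careful to state explicitly rather than skip is the passage from the strict deviation event $\{|\Gamma-\mu|>\epsilon\cdot\mu\}$ (the exact complement of the interval) to the non-strict event $\{|\Gamma-\mu|\geq\epsilon\cdot\mu\}$ (the one Theorem~\ref{theorem:1} controls); the containment makes the inequality go the right way, so the stated bound is if anything slightly conservative, but it deserves a sentence.
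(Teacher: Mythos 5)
Your proof is correct and follows essentially the same route as the paper's: rewrite the interval event as $|\Gamma-\mu|\leq\epsilon\cdot\mu$, pass to the complement, and invoke Theorem~\ref{theorem:1}. If anything, you are slightly more careful than the paper, which phrases the complementation step as an ``iff'' and silently conflates the strict and non-strict deviation events, whereas you note explicitly that the containment only gives an inequality in the needed direction.
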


\begin{construction}
Let $h$ be chosen randomly from $H(n, m, r)$. For each $M\in
\Md{\Sigma}{\phi}$, we define an integer random variable~$\gamma_M$
(random over~$h$) such that
\[
\gamma_M :=
\begin{cases}
1,\qquad\text{if } h(M)=1^m\\
0,\qquad\text{otherwise.}
\end{cases}
\]
%
The sum of these random variables we denote by  $\Gamma := \sum_{M \in
\Md{\Sigma}{\phi}} \gamma_M$.
\end{construction}
Clearly, $\Gamma = |\phi_h|$ (cf.~Construction~\ref{cons:phih}).
\begin{lemma}\label{thm:r-univ}
For the above construction, the following holds:
\begin{enumerate}
\item The variables~$\gamma_M$ are $r$-wise independent.
\item For any $M\in \Md{\Sigma}{\phi}$, $\Pr[\gamma_M=1]=1/2^m$.
\end{enumerate}
\end{lemma}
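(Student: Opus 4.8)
The plan is to verify both claims directly from the definition of the strongly $r$-universal hash family $H(n,m,r)$ and the definition of the random variables $\gamma_M$, which are simple indicator functions of the events $h(M)=1^m$.

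First I would handle part~2. Fix a model $M\in\Md{\Sigma}{\phi}$, regarded as a key in $K=\Z_2^n$. By the definition of strong $r$-universality, applied to the single key $k_1=M$ and the single value $v_1=1^m\in B=\Z_2^m$ (a degenerate $1$-tuple, which is covered since $r\geqslant 1$, or obtained by marginalizing an $r$-tuple over the remaining coordinates), we get $\Pr_{h}[h(M)=1^m]=1/|B|=1/2^m$. Since $\gamma_M=1$ exactly on this event, $\Pr[\gamma_M=1]=1/2^m$, as claimed.

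Next I would handle part~1. Take any $r$ distinct models $M_1,\ldots,M_r\in\Md{\Sigma}{\phi}$; these are $r$ distinct keys in $\Z_2^n$. I must show the random variables $\gamma_{M_1},\ldots,\gamma_{M_r}$ are (mutually) independent. Each $\gamma_{M_i}$ is a function of $h(M_i)$ alone, namely $\gamma_{M_i}=\mathbf{1}[h(M_i)=1^m]$, so it suffices to show that the $\Z_2^m$-valued random variables $h(M_1),\ldots,h(M_r)$ are mutually independent; independence is preserved under applying a (measurable) function to each coordinate separately. But mutual independence of $h(k_1),\ldots,h(k_r)$ for distinct keys is exactly the defining property of strong $r$-universality: for every choice of values $(v_1,\ldots,v_r)\in B^r$, $\Pr_h[h(k_1)=v_1\wedge\ldots\wedge h(k_r)=v_r]=(1/|B|)^r=\prod_i\Pr_h[h(k_i)=v_i]$, where the last equality uses part~2 (each marginal is uniform). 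Hence the $h(M_i)$, and therefore the $\gamma_{M_i}$, are mutually independent; since this holds for every $r$-subset, the whole family $\{\gamma_M\}$ is $r$-wise independent. I would also note that $H(n,m,r)$ is nonempty and that $H_{\mathit{xor}}(n,m,3)$ with $r=3$ is strongly $3$-universal, as stated earlier, so the lemma applies to the concrete family used in the implementation.

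The argument is essentially a restatement of the hash-family axioms, so there is no deep obstacle; the only point requiring a little care is the bookkeeping that lets us pass from an $r$-tuple of keys in the definition to fewer keys (for part~2) and that lets us push the indicator function through without disturbing independence (for part~1). I would make explicit that strong $r$-universality of a family trivially implies strong $r'$-universality for every $r'\leqslant r$ (by ignoring the extra keys/marginalizing), which is what justifies invoking the definition with a single key in part~2 and is implicitly used when $|\Md{\Sigma}{\phi}|<r$.
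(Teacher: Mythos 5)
Your proposal is correct and follows essentially the same route as the paper: part~2 by marginalizing the $r$-tuple in the definition down to a single key (the paper spells this out as ``strong $r$-universality implies strict $(r-1)$-universality,'' iterated), and part~1 by observing that each $\gamma_{M_i}$ is a function of $h(M_i)$ alone and that function application preserves independence. Your version is slightly more explicit about factorizing the joint probability into uniform marginals and about the degenerate case of fewer than $r$ models, but these are elaborations of the same argument, not a different one.
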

\noindent The expectation of~$\Gamma$ (over~$h$) is thus
$\mu := E[\Gamma] = \sum_{M \in
\Md{\Sigma}{\phi}} E[\gamma_M] = \sum_{M \in \Md{\Sigma}{\phi}}
2^{-m} \cdot 1 = |\phi| \cdot 2^{-m}$.
%
Substituting these values into Corollary~\ref{corollary:1}, we obtain:
\begin{lemma}[Models in a hash bucket]
  \label{lemma:effect_of_H}
  Let $ \phi \in \Fml_\Sigma $, $ n := |\Sigma| $, $ \epsilon \in [0,
  1] $, let $ m\in\N$ with $m\leq \lfloor \log_2(|\phi| \cdot
  \epsilon^2 / (r \cdot \sqrt[3]{e})) \rfloor $, $h\in H(n, m, r) $ a
  randomly chosen strongly $r$-universal hash function. It holds:
  \[
  \Pr\left[(1 - \epsilon) \cdot \frac{|\phi|}{2^m}
  \leq |\phi_h|
  \leq (1 + \epsilon) \cdot \frac{|\phi|}{2^m} \right]
  \geq 1 - e^{\lfloor -r / 2 \rfloor}\enspace.
  \]
\end{lemma}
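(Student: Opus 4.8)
The plan is to obtain Lemma~\ref{lemma:effect_of_H} as a direct instantiation of Corollary~\ref{corollary:1} applied to the indicator variables $\gamma_M$ introduced in the Construction preceding it. Recall that $\Gamma := \sum_{M\in\Md{\Sigma}{\phi}}\gamma_M$ satisfies $\Gamma = |\phi_h|$, that by Lemma~\ref{thm:r-univ} the $\gamma_M$ are $r$-wise independent with $\Pr[\gamma_M=1]=2^{-m}$, hence each $\gamma_M$ takes values in $\{0,1\}\subseteq[0,1]$, and that consequently $\mu := E[\Gamma] = |\phi|\cdot 2^{-m}$. So all structural hypotheses of Theorem~\ref{theorem:1} / Corollary~\ref{corollary:1} are already in place; the only thing that needs checking is the side condition $r \leq \lfloor \epsilon^2\cdot\mu/\sqrt[3]{e}\rfloor$.

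First I would discharge that side condition by chasing the hypothesis on $m$. From $m \leq \lfloor \log_2(|\phi|\cdot\epsilon^2/(r\cdot\sqrt[3]{e}))\rfloor \leq \log_2(|\phi|\cdot\epsilon^2/(r\cdot\sqrt[3]{e}))$ and monotonicity of $x\mapsto 2^x$ we get $2^m \leq |\phi|\cdot\epsilon^2/(r\cdot\sqrt[3]{e})$, which rearranges to
\[
r \;\leq\; \frac{|\phi|\cdot\epsilon^2}{2^m\cdot\sqrt[3]{e}} \;=\; \frac{\epsilon^2\cdot\mu}{\sqrt[3]{e}}\enspace,
\]
and since $r$ is a natural number this gives $r \leq \lfloor \epsilon^2\cdot\mu/\sqrt[3]{e}\rfloor$, as required.

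Then Corollary~\ref{corollary:1} applies verbatim and yields $\Pr[(1-\epsilon)\cdot\mu \leq \Gamma \leq (1+\epsilon)\cdot\mu] \geq 1 - e^{\lfloor -r/2\rfloor}$; substituting $\Gamma = |\phi_h|$ and $\mu = |\phi|/2^m$ gives exactly the claimed inequality. I would also note the degenerate cases: if $\phi$ is unsatisfiable ($|\phi|=0$) or $\epsilon=0$, the argument of $\log_2$ in the premise drops to $0$ (resp.\ to $\leq 0$ once $r,\sqrt[3]{e}>1$), so no $m\in\N$ satisfies the premise and the statement holds vacuously; this is also why the division by $|\phi|$ above never causes trouble.

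The main obstacle is essentially bookkeeping rather than mathematics: one must line up the nested floor functions in the premise on $m$ with the floor in the premise of Corollary~\ref{corollary:1}, and confirm that the chosen form of the bound on $m$ is precisely the weakest one that still licenses the corollary. All the genuine probabilistic content — the concentration bound under limited independence — is imported wholesale from Theorem~\ref{theorem:1}, and the $r$-wise independence and uniform bucket probability of the indicators from Lemma~\ref{thm:r-univ}, so no further estimates are needed.
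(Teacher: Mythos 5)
Your proposal is correct and follows exactly the route the paper takes: the paper obtains this lemma by the single remark ``Substituting these values into Corollary~\ref{corollary:1}, we obtain\ldots'', relying on Lemma~\ref{thm:r-univ} and the computation $\mu = |\phi|\cdot 2^{-m}$ just as you do. Your explicit verification that the hypothesis on $m$ discharges the side condition $r \leq \lfloor \epsilon^2\cdot\mu/\sqrt[3]{e}\rfloor$ (and the note on the vacuous cases) only spells out what the paper leaves implicit.
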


One could think that this lemma is sufficient for estimating~$|\phi|$ by
determining $|\phi_h|$ for some~$m$ (e.g., with \bssat), but,
unfortunately, the upper bound on the admissible values of~$m$ depends
on~$|\phi|$, the very value we are trying to estimate.  This fact forces
us to search for a ``good'' value of~$m$ in \core\
(Lines~\ref{l:repi}--\ref{l:repii}). The search proceeds
in ascending order of~$m$ for reasons of soundness, which will be
explained in the main theorem below. At the same time, smaller values
of~$m$ correspond to larger values of~$|\phi_h|$, which may be
infeasible to count (for $m=0$, for instance, $\phi_h=\phi$). We thus
introduce the counting upper bound \pivot, which only depends on the
tolerance and is defined in \core. The search terminates successfully
when $|\phi_h|\leq\pivot$. We show that this search criterion does not
reduce the probability of correct estimation
(Theorem~\ref{theorem:guarantee}).

\begin{lemma}
\core\ terminates for all inputs.
\end{lemma}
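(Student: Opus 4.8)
The plan is to split the argument into two independent claims: (i) every single pass through the repeat--until loop of \core\ completes, and (ii) the loop performs only finitely many passes. Since the code surrounding the loop consists only of a handful of arithmetic assignments and a single return, (i) and (ii) together yield the lemma.

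For (i), I would examine the loop body, which performs three actions: it increments $m$; it draws a hash function $h$ uniformly from $H(n,m,r)$; and it calls \bssat($\phi_h,\Delta,\pivot+1$). The increment is trivial, and sampling from $H(n,m,r)=H_\mathit{xor}(n,m,3)$ just means choosing the $m\cdot(n+1)$ bits $b_{1;0},\dots,b_{m;n}$, hence completes. Before looking at \bssat, I would note that $\pivot$ is a well-defined positive integer: with $r\geq 1$ and $\epsilon\in(0,1]$ the argument of the ceiling in Line~\ref{alg:approxmc_pivot} is a strictly positive real, so $1\leq\pivot<\infty$. Then I would invoke the structure of \bssat\ (Algorithm~\ref{alg:projmodenum}): called with the bound $\pivot+1$, its \textbf{while}-loop increases $k$ by one on each iteration and the conjunct $k<n$ of its guard (here $n=\pivot+1$) therefore fails after at most $\pivot+1$ iterations; granting that the oracle $\model(\cdot)$ is a terminating decision procedure for satisfiability of CNF-with-XOR formulas --- the standing assumption of the paper --- each iteration, and hence the whole call, terminates.

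For (ii), the key observation is that $m$ is a loop variant: it is set to $0$ in Line~\ref{alg:approxmccore_m_init}, is increased by exactly $1$ at the top of every iteration, and is changed nowhere else. Put $N:=\big\lceil\log_2\!\big((1+\epsilon)\cdot 2^n/\pivot\big)\big\rceil$; this is a well-defined integer (possibly $\leq 0$, but finite) precisely because $\pivot>0$. The disjunct $m>N$ of the exit condition in Line~\ref{exitcond} is then satisfied after at most $\max(N+1,1)$ iterations, so the loop exits regardless of the value of $c$. This bounds the number of passes, and combined with (i) it establishes that \core\ halts on every (admissible) input. There is no genuine difficulty in this proof; the only points that deserve an explicit word are that the termination bound $N$ is a finite number (which comes down to $\pivot>0$) and that \bssat\ --- through its reliance on the $\model$ oracle --- is itself a terminating procedure, which is assumed.
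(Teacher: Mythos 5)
Your proof is correct and rests on the same key observation as the paper's: $m$ increases monotonically, so the second (``emergency stop'') disjunct of the exit condition in Line~\ref{exitcond} must eventually hold. The extra care you take in checking that each loop body terminates (in particular that \bssat\ halts, given the bound $\pivot+1$ and a terminating \model\ oracle) is a reasonable elaboration the paper leaves implicit, not a different route.
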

\begin{proof}
The loop has two exit conditions combined in a disjunction
(Line~\ref{exitcond}). Since $m$ is monotonically increasing, the second
exit condition guarantees termination. Note that the second exit
condition makes use of the fact that $|\phi|\leq 2^n$ and is
essentially an emergency stop. It does not, in general, entail that the
algorithm returns a tolerable estimation. This is not a problem, as we
will show that the first exit condition (which implies an estimation
within desired tolerance) terminates the loop sufficiently often for the
desired confidence level.
\end{proof}

\begin{theorem}[Main result]
  \label{theorem:guarantee}
  For $ \epsilon \in (0, 1] $ algorithm \core\ returns with a
  probability of at least $ 1 - e^{\lfloor - r / 2 \rfloor} $ a value
  within $ \left[(1 - \epsilon) \cdot |\phi|, (1 + \epsilon) \cdot
  |\phi| \right] $.
\end{theorem}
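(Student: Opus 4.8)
The plan is to follow the loop counter $m$ of \core, which ascends through $1,2,\dots$, and to show that with the asserted probability the loop leaves through its \emph{first} exit condition, $c\le\pivot$, at an iteration whose hash $h$ makes the two-sided estimate of Lemma~\ref{lemma:effect_of_H} both applicable and true. At such an iteration $|\phi_h|\le\pivot<\pivot+1$, so $c=\min(|\phi_h|,\pivot+1)=|\phi_h|$, and the returned value $c\cdot 2^m=|\phi_h|\cdot 2^m$ lands in $[(1-\epsilon)|\phi|,(1+\epsilon)|\phi|]$ by that lemma. The entire probabilistic content is thereby pushed into the Chernoff--Hoeffding events of Lemma~\ref{lemma:effect_of_H}; what remains is deterministic bookkeeping about \emph{which} $m$ the loop can stop at. (The case in which $\phi$ has so few models that the admissibility bound $m^\ast$ below is $<1$ is degenerate and is treated separately; assume $m^\ast\ge 1$.)

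The deterministic skeleton is to confine the stopping iteration to a narrow band. Write $\mu_m:=|\phi|/2^m$, let $m^\ast:=\lfloor\log_2(|\phi|\,\epsilon^2/(r\sqrt[3]{e}))\rfloor$ be the largest $m$ to which Lemma~\ref{lemma:effect_of_H} applies, and set $m_0:=\lceil\log_2((1-\epsilon)|\phi|/\pivot)\rceil-1$ and $m_1:=\lceil\log_2((1+\epsilon)|\phi|/\pivot)\rceil$. Using $\pivot\ge 2r(1+\epsilon)\sqrt[3]{e}/\epsilon^2$ one checks $m_1\le m^\ast$ --- the factor $2$ in $\pivot$ is exactly the one-step margin that buys this --- so Lemma~\ref{lemma:effect_of_H} is available at every iteration $m\le m_1$. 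Two elementary observations follow: (i) for $m\le m_0$ one has $(1-\epsilon)\mu_m>\pivot$, and for $m\ge m_1$ one has $(1+\epsilon)\mu_m\le\pivot$; (ii) because $|\phi|\le 2^n$, the emergency threshold $\lceil\log_2((1+\epsilon)2^n/\pivot)\rceil$ is at least $m_1$, so the second exit condition cannot fire before iteration $m_1$. Here the ascending order of the search earns its keep: the admissibility bound $m^\ast$ is one-sided, so a descending search would report the \emph{largest} $m$ with $|\phi_h|\le\pivot$, which lies far above $m^\ast$ and carries no guarantee, whereas ascending the loop halts at the \emph{first} such $m$, which observation~(i) forces into the band $(m_0,m_1]\subseteq\{1,\dots,m^\ast\}$.

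The probabilistic step is then to condition on the two-sided bound of Lemma~\ref{lemma:effect_of_H} holding at the iterations up to $m_1$: on that event the loop does not stop at any $m\le m_0$ (there $|\phi_h|\ge(1-\epsilon)\mu_m>\pivot$), it does stop no later than $m_1$ (there $|\phi_h|\le(1+\epsilon)\mu_{m_1}\le\pivot$), which by (ii) it reaches before any emergency stop, and at whichever $m\in(m_0,m_1]$ it stops one has $c=|\phi_h|$ together with $(1-\epsilon)|\phi|\le|\phi_h|\cdot 2^m\le(1+\epsilon)|\phi|$.

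I expect the crux to be the failure-probability accounting. The event just conditioned on is an intersection of Chernoff--Hoeffding events over roughly $\log_2|\phi|$ iterations, so a naive union bound would spoil the confidence by that many-fold factor --- nowhere near $1-e^{\lfloor-r/2\rfloor}$. The real work is to show that the failure modes collapse to essentially a single use of Lemma~\ref{lemma:effect_of_H}. The shape of the argument should be: only the iteration at which the loop actually stops can produce an out-of-tolerance answer; over-reporting there ($|\phi_h|>(1+\epsilon)\mu_m$ yet $\le\pivot$) forces $(1+\epsilon)\mu_m<\pivot$, hence can occur only in a bounded block of iterations just below $m^\ast$; and spurious early stopping ($|\phi_h|<(1-\epsilon)\mu_m$) is a large deviation at small $m$ whose probability decays geometrically in $m^\ast-m$, so that the sum over iterations is dominated by that bounded block near $m^\ast$. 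Proving that this dominant block, with the precise constant designed into $\pivot$, contributes total probability at most $e^{\lfloor-r/2\rfloor}$ --- rather than a multiple of it --- is the technical heart, and I would expect most of the proof's effort to go there, along with verifying the elementary inequalities $m_1\le m^\ast$ and the facts (i) and (ii).
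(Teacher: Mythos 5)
Your deterministic skeleton coincides with the paper's: your band $(m_0,m_1]$ with $m_1\le m^\ast$ is exactly what Lemma~\ref{lemma:ex} establishes (the factor $2$ in $\pivot$ being the one-step margin, as you note), and your observations that the first exit condition cannot fire for $m\le m_0$ on the good event, must fire by $m_1$, and cannot be preempted by the emergency stop, are all present in or immediately adjacent to the paper's argument. Where you and the paper part ways is the probabilistic accounting. The paper does \emph{not} intersect Chernoff--Hoeffding events over all iterations up to $m_1$ and then rescue the union bound; it applies Lemma~\ref{lemma:effect_of_H} exactly once, at the single index $m'$ supplied by Lemma~\ref{lemma:ex}, obtaining that with probability at least $1-e^{\lfloor -r/2\rfloor}$ both $(1-\epsilon)|\phi|/2^{m'}\le|\phi_h|\le(1+\epsilon)|\phi|/2^{m'}$ and $|\phi_h|\le\pivot$ hold at iteration $m'$. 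On that one event the loop is guaranteed to terminate via the first exit condition no later than $m'$, and termination \emph{at} $m'$ yields a tolerable result; termination strictly before $m'$ is dismissed in a single sentence (``not problematic, since result quality hinges on condition~\eqref{ex2}, which is an upper bound on $m$''). The elaborate per-iteration failure analysis you anticipate as ``the technical heart'' is simply not in the paper.

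That said, your proposal as written has a genuine gap, and it is precisely the part you defer. You correctly observe that a naive union bound over roughly $\log_2|\phi|$ iterations destroys the constant, but the repair you sketch --- geometric decay of the spurious-early-stopping probability in $m^\ast-m$, summed over a dominant block near $m^\ast$ and totalling at most $e^{\lfloor -r/2\rfloor}$ --- is never carried out, and it is not obtainable from Theorem~\ref{theorem:1} as stated: that bound is a flat $e^{\lfloor -r/2\rfloor}$ once the deviation exceeds $\epsilon\mu$, with $\epsilon$ capped at $1$, so it yields no decay as the required deviation grows (a Chebyshev-type argument from pairwise independence would give decay, but that is a different tool with different constants, and nothing in the paper develops it). So you must either adopt the paper's single-application argument --- accepting, as the paper implicitly does, that the event ``the loop stops at some $m\le m_0$ with an out-of-tolerance value'' is absorbed into the complement of the one event at $m'$, which is exactly the step your own analysis flags as needing justification --- or actually supply the summation you describe. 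As it stands the proposal identifies the right difficulty but does not resolve it, and the claimed confidence $1-e^{\lfloor -r/2\rfloor}$ is not established by what you wrote.
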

\begin{proof}
It is easy to see that $c = \min(|\phi_h|, \pivot + 1)$ is an invariant
of the loop in \core. If the exit condition $c\leq\pivot$ comes
to hold, the invariant dictates that \core\ returns
$2^m\cdot|\phi_h|$.

We now show that there is at least one iteration of the loop (indexed
by $m=m'$) such that  with a probability of at least $1 - e^{\lfloor - r
/ 2 \rfloor}$ the following is true: the exit condition $c\leq\pivot$
holds  and the return value $2^{m'}\cdot|\phi_h|\in\left[(1 - \epsilon)
\cdot |\phi|, (1 + \epsilon) \cdot |\phi| \right]$. But first, we
interrupt the proof for a lemma.

  \begin{lemma}\label{lemma:ex}
  For the given choice of $\pivot$, there exists~$m'$ such that:
  \begin{gather}
  \lceil \log_2((1 + \epsilon) \cdot |\phi| / \pivot) \rceil \leq
  m'\label{ex1}\\
  m' \leq
  \lfloor \log_2(|\phi| \cdot \epsilon^2 / (r \cdot \sqrt[3]{e}))
  \rfloor\label{ex2}\enspace.
  \end{gather}
  \end{lemma}
  \begin{proof}
  It is straightforward to show that
  $\lceil \log_2((1 + \epsilon) \cdot |\phi| / \pivot) \rceil \leq
  \lfloor \log_2(|\phi| \cdot \epsilon^2 / (r \cdot \sqrt[3]{e}))
  \rfloor$.
  \end{proof}

Condition~\eqref{ex2} on~$m'$ fulfills the precondition of
Lemma~\ref{lemma:effect_of_H} and thus entails
  \begin{equation}\label{eq:guar}
  \Pr\left[(1 - \epsilon) \cdot \frac{|\phi|}{2^{m'}}
    \leq |\phi_h|
    \leq (1 + \epsilon) \cdot \frac{|\phi|}{2^{m'}} \right]
  \geq 1 - e^{\lfloor -r / 2 \rfloor}
  \end{equation}
which together with condition~\eqref{ex1}, which is equivalent to
$(1 + \epsilon) \cdot \frac{|\phi|}{2^{m'}} \leq \pivot$, gives
  \begin{equation}
    \label{eq:theorem:guarantee_goal}
    \Pr\left[(1 - \epsilon) \cdot \frac{|\phi|}{2^{m'}}
      \leq |\phi_h|
      \leq (1 + \epsilon) \cdot \frac{|\phi|}{2^{m'}}
      \leq \pivot \right]
    \geq 1 - e^{\lfloor - r / 2 \rfloor}
  \end{equation}
resp.\@
  \[
  \Pr\left[(1 - \epsilon) \cdot \frac{|\phi|}{2^{m'}}
  \leq |\phi_h|
  \leq (1 + \epsilon) \cdot \frac{|\phi|}{2^{m'}}
  \; \wedge \; |\phi_h| \leq \pivot \right]
  \geq 1 - e^{\lfloor - r / 2 \rfloor}\enspace.
  \]
Since we are incrementing~$m$ during search, the last equation implies
that both loop termination and result quality are likely at some point.
We also note that an earlier termination with $m<m'$ is not problematic,
since result quality hinges on condition~\eqref{ex2}, which is an upper
bound on~$m$.
\end{proof}


\subsection{Scaling to Arbitrary Confidence}\label{sec:amplif}

\newcommand{\plaincore}{\bgroup\let\textsc\relax\core\egroup}

\begin{algorithm}
  \caption{\main($r$, $\phi$, $\Delta$, $\epsilon$, $\delta$)}
  \label{alg:approxmc}
  \SetAlgoLined

  $ t := \min\left(\left\{ n \in \mathbb{N}:
  \delta \geq
  \sum_{k = \lceil n / 2 \rceil}^{n} \binom{n}{k}
  \cdot e^{\lfloor -r / 2 \rfloor \cdot k} \cdot (1 - e^{\lfloor -r / 2 \rfloor})^{n - k}
  \right\}\right) $ \;
  \label{alg:approxmc_t}
  $ \pivot := \left\lceil \frac{2 \cdot r \cdot (1 + \epsilon)
      \cdot \sqrt[3]{e}}{\epsilon^2} \right\rceil $ \;

  $ c \leftarrow $ \bssat($ \phi, \Delta, \pivot + 1 $) \;
  \label{alg_approxmc_boundedsat}
  \eIf{$ c \leq \pivot $}{
    \Return{$ c $} \tcp*{solution is exact with confidence $1$}
    \label{alg:approxmc_ret_exact}
  }{
    $\mathit{samples}\leftarrow $ repeat($t$, \core($r$, $\phi$, $\Delta$, $\epsilon$))
    \label{alg:approxmc_repeat}
    \tcp*{do \plaincore\ $t$ times}

    \Return{\mbox{\rm median}($\mathit{samples}$)} \;
  }
\end{algorithm}

Since \core\ returns the correct result with a probability of
approx.~$0.86>0.5$, it is possible to amplify the confidence by
repeating the experiment. This is what algorithm \main\ does. To prove
its correctness  (Theorem~\ref{theorem:3}), a technical lemma is needed
first.
\begin{lemma}[Biased coin tosses and related estimations]
  \label{lemma:coin_tosses}
  Let $p := \Pr[\text{head}]$ be the probability of tossing head with a biased coin.
  \begin{enumerate}[(a)]
  \item \label{lemma:coin_tosses_1} The probability to toss $ m $ times
    head in $n\geqslant m$ independent coin tosses is
    \( \binom{n}{m} \cdot p^m \cdot (1 - p)^{n - m} \)
  \item \label{lemma:coin_tosses_2} The probability of tossing at least
    $m$ heads is:
    \( \sum_{k = m}^n \binom{n}{k} \cdot p^k \cdot (1 - p)^{n - k} \)

  \item \label{lemma:coin_tosses_3} (Geometric series) For every non-negative
    integer $n$:
    \( \sum_{k = 0}^n x^k = \frac{1 - x^{n + 1}}{1 - x} \)
  \item \label{lemma:coin_tosses_4} The probability to toss at least
    $m = \lceil n / 2 \rceil$ times head for $p\in[0, 1/2]$ can
    be bounded from above:
    \[
    \sum_{k = \lceil n / 2 \rceil}^n \binom{n}{k} \cdot p^k \cdot (1 - p)^{n - k}
    \leq \frac{1 - p}{1 - 2 \cdot p}
    \cdot \left( \sqrt{4 \cdot p \cdot (1 - p)} \right)^n\enspace.
    \]
  \end{enumerate}
\end{lemma}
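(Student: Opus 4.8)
The plan is to dispatch parts~\ref{lemma:coin_tosses_1}--\ref{lemma:coin_tosses_3} quickly as standard facts and to spend the real effort on part~\ref{lemma:coin_tosses_4}.

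For part~\ref{lemma:coin_tosses_1} I would argue combinatorially: the event ``exactly $m$ heads in $n$ tosses'' is the disjoint union of the $\binom{n}{m}$ elementary outcomes that prescribe which $m$ of the $n$ positions are heads; by independence each such outcome has probability $p^m(1-p)^{n-m}$, and summing over all of them gives the stated formula. Part~\ref{lemma:coin_tosses_2} is then immediate: ``at least $m$ heads'' is the disjoint union of ``exactly $k$ heads'' for $k=m,\dots,n$, so its probability is obtained by summing the expression of part~\ref{lemma:coin_tosses_1} over those~$k$. Part~\ref{lemma:coin_tosses_3} is the finite geometric-series identity; I would prove it by induction on $n$, or equivalently by observing that $(1-x)\sum_{k=0}^{n}x^k$ telescopes to $1-x^{n+1}$ (with the degenerate case $x=1$ read as a limit).

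The substance is part~\ref{lemma:coin_tosses_4}. Starting from part~\ref{lemma:coin_tosses_2} with $m=\lceil n/2\rceil$, I would write each summand as $p^k(1-p)^{n-k}=(1-p)^n r^k$ where $r:=p/(1-p)$, noting that the hypothesis $p\in[0,1/2]$ forces $r\le 1$. Bounding $\binom{n}{k}\le 2^n$ uniformly and pulling the constant out yields
\[
\sum_{k=\lceil n/2\rceil}^{n}\binom{n}{k}\,p^k(1-p)^{n-k}\ \le\ 2^n(1-p)^n\sum_{k=\lceil n/2\rceil}^{n} r^k .
\]
Reindexing the inner sum as $r^{\lceil n/2\rceil}\sum_{j=0}^{\,n-\lceil n/2\rceil}r^j$, applying part~\ref{lemma:coin_tosses_3}, and discarding the nonnegative subtracted term in the resulting numerator gives $\sum_{k=\lceil n/2\rceil}^{n}r^k\le r^{\lceil n/2\rceil}/(1-r)$. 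The prefactor of the claim now appears on its own, since $1/(1-r)=(1-p)/(1-2p)$. Finally, because $r\le 1$ and $\lceil n/2\rceil\ge n/2$ we have $r^{\lceil n/2\rceil}\le r^{n/2}$, and a one-line simplification shows $2^n(1-p)^n r^{n/2}=2^n\bigl(p(1-p)\bigr)^{n/2}=\bigl(\sqrt{4p(1-p)}\bigr)^n$, which is exactly the quantity multiplying $(1-p)/(1-2p)$ in the statement.

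The main point to watch is the bookkeeping around the ceiling: the step $r^{\lceil n/2\rceil}\le r^{n/2}$ goes in the weakening direction precisely because $r\le 1$, so the argument genuinely needs $p\le 1/2$ rather than merely $p<1$. The other subtlety is the boundary value $p=1/2$: there $1-2p=0$, the sum $\sum r^k$ with $r=1$ diverges, and the right-hand side of the claim is $+\infty$, so the inequality holds vacuously; in the intended application $p=e^{\lfloor -r/2\rfloor}$ lies strictly below $1/2$ (for instance $e^{-1}$ when $r=3$), hence $r<1$ and everything converges. Beyond this I anticipate no conceptual difficulty: once the substitution $r=p/(1-p)$ is in place, part~\ref{lemma:coin_tosses_4} is a short Chernoff-flavoured tail estimate.
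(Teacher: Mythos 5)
Your proof is correct and follows essentially the same route as the paper's: parts (a)--(c) as standard facts, and for (d) the uniform bound $\binom{n}{k}\leq 2^n$, the substitution $r=p/(1-p)$, the geometric-series evaluation of the tail sum with the subtracted term discarded, and the final step $r^{\lceil n/2\rceil}\leq r^{n/2}$. Your explicit treatment of the boundary case $p=1/2$ (where the bound holds vacuously) is a small point of care that the paper's proof glosses over, but it does not change the argument.
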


\begin{theorem}[Theorem 3 in~\cite{ChakrabortyMV13}]
  \label{theorem:3}
  Let $\phi$ be a formula,  $\delta$ and $\epsilon$ parameters in
  $(0, 1]$, and $\tilde{c}$~an output of $\main(r, \phi, \epsilon,
  \delta)$. Then
  $\Pr\big[(1-\epsilon)\cdot|\phi|\leqslant\tilde{c}\leqslant
  (1+\epsilon)\cdot|\phi|\big]\geqslant 1-\delta$.
\end{theorem}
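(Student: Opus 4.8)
The plan is to argue that the median of $t$ independent runs of \core\ is a tolerable estimate unless "too many" of those runs fail, and then to show that the choice of $t$ in Line~\ref{alg:approxmc_t} of \main\ makes the probability of that bad event at most~$\delta$. First I would dispose of the easy branch: if the initial call to \bssat\ in Line~\ref{alg_approxmc_boundedsat} returns $c\leq\pivot$, then by the specification of \bssat\ we have $c=|\phi|$ exactly (since the bound $\pivot+1$ was not reached), so the returned value is exact and the claim holds trivially with probability~$1$. So assume the other branch, where \main\ returns $\mathrm{median}(\mathit{samples})$ over $t$ independent invocations of \core.

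Call a single run of \core\ a \emph{success} if its output lies in $[(1-\epsilon)|\phi|,(1+\epsilon)|\phi|]$ and a \emph{failure} otherwise. By Theorem~\ref{theorem:guarantee}, each run independently fails with probability at most $p:=e^{\lfloor -r/2\rfloor}$. The key combinatorial observation is the usual median amplification fact: if at least $\lceil t/2\rceil$ of the $t$ runs succeed, then the median of the $t$ outputs lies between the smallest and the largest successful value, hence inside the tolerable interval $[(1-\epsilon)|\phi|,(1+\epsilon)|\phi|]$. Consequently the median is outside the tolerable interval only if strictly more than $\lfloor t/2\rfloor$ runs fail, i.e.\ at least $\lceil t/2\rceil$ of them fail. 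Modelling each run as a biased coin with $\Pr[\text{head}]=p$ (head~$=$~failure), Lemma~\ref{lemma:coin_tosses}\eqref{lemma:coin_tosses_2} gives that the probability of at least $\lceil t/2\rceil$ failures is at most
\[
\sum_{k=\lceil t/2\rceil}^{t}\binom{t}{k}\,p^{k}\,(1-p)^{t-k},
\]
where I should note $p\leq 1/2$ (true for $r\geq 2$, in particular $r=3$) so that increasing the per-run failure probability to exactly $p$ only inflates the bound and monotonicity is on our side. But this sum with $p=e^{\lfloor -r/2\rfloor}$ is exactly the expression appearing in the definition of $t$ in Line~\ref{alg:approxmc_t}, and $t$ is chosen as the least natural number making that sum $\leq\delta$. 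Hence the median lies outside the tolerable interval with probability at most~$\delta$, which is the claim.

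Two points need a little care rather than being routine. The first is that the set $\{n\in\N:\delta\geq\sum_{k=\lceil n/2\rceil}^{n}\binom{n}{k}e^{\lfloor -r/2\rfloor k}(1-e^{\lfloor -r/2\rfloor})^{n-k}\}$ is nonempty, so that the $\min$ defining $t$ exists; here is where Lemma~\ref{lemma:coin_tosses}\eqref{lemma:coin_tosses_4} does the work, since it bounds the tail by $\frac{1-p}{1-2p}\bigl(\sqrt{4p(1-p)}\bigr)^{n}$, and $\sqrt{4p(1-p)}<1$ whenever $p<1/2$, so the bound tends to~$0$ as $n\to\infty$ and eventually drops below any $\delta>0$ (the boundary case $p=1/2$ does not arise for $r\geq 2$). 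I expect this nonemptiness/convergence step, together with the bookkeeping of strict-versus-nonstrict inequalities in "more than half the runs fail," to be the only real obstacle; the rest is the standard median trick plus reading off that the two occurrences of the binomial tail — in Theorem~\ref{theorem:guarantee}'s per-run bound and in \main's definition of $t$ — match. The second, minor, point is the independence of the $t$ runs, which holds because \core\ draws its hash functions with fresh independent randomness on each invocation of \texttt{repeat}.
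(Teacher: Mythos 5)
Your proposal is correct and follows essentially the same route as the paper's proof: dispose of the exact branch, apply the median trick to reduce failure of the output to the event that at least $\lceil t/2\rceil$ of the $t$ independent \core{} runs fail, bound that event by the binomial tail of Lemma~\ref{lemma:coin_tosses}\eqref{lemma:coin_tosses_2} with per-run failure probability $e^{\lfloor -r/2\rfloor}$ from Theorem~\ref{theorem:guarantee}, and observe that the definition of $t$ makes this tail at most $\delta$, with nonemptiness of the defining set secured by Lemma~\ref{lemma:coin_tosses}\eqref{lemma:coin_tosses_4}. Your explicit remarks on monotonicity of the tail in $p$ and on the independence of the repetitions are small refinements the paper leaves implicit, not a different argument.
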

\begin{proof}
  If $|\phi|\leq\pivot$, \main\ returns the exact solution
  (Algorithm~\ref{alg:approxmc}, Line~\ref{alg:approxmc_ret_exact}). If
  not, the algorithm returns the median~$\tilde{c}$ of
  $t$~probabilistic
  estimations ($t$~is defined in Algorithm~\ref{alg:approxmc},
  Line~\ref{alg:approxmc_t}).
  The goal is to show that the probability of~$\tilde{c}$
  being outside the
  tolerance is at most~$\delta$.

  A necessity for~$\tilde{c}$ being outside the tolerance is that at
  least $\lceil t / 2 \rceil$ of the estimations of \core\ are
  outside the tolerance, due to the definition of the median. The
  probability that a single estimation of \core\ is outside the
  tolerance is at most $e^{\lfloor -r / 2 \rfloor}$ by
  Theorem~\ref{theorem:guarantee}. Now, the probability to have at
  least $\lceil t / 2 \rceil$ estimations (out of~$t$
  estimations in total) outside the tolerance can be seen as the
  probability to toss at least $\lceil t / 2 \rceil$ times head in a
  series of $t$~coin tosses where $\Pr[\text{head}]=e^{\lfloor -r / 2
  \rfloor}$. By Lemma~\ref{lemma:coin_tosses}~\ref{lemma:coin_tosses_2},
  this probability is:
  \[
  \sum_{k = \lceil t / 2 \rceil}^t \binom{t}{k}
  \cdot e^{\lfloor -r / 2 \rfloor \cdot k}
  \cdot \left(1 - e^{\lfloor -r / 2 \rfloor} \right)^{t - k}\enspace.
  \]
  Due to the choice of~$t$, this probability is smaller than~$\delta$.
  The existence of~$t$ is ensured, because $t$ can be bounded from
  above per Lemma~\ref{lemma:coin_tosses}~\ref{lemma:coin_tosses_4}:
  \[
  t \leq \max\left(1, \left\lceil
  \log_{\sqrt{4 \cdot e^{\lfloor -r / 2 \rfloor} \cdot (1 - e^{\lfloor -r / 2 \rfloor})}} \left(
    \delta \cdot \frac{1 - 2 \cdot e^{\lfloor -r / 2 \rfloor}}{1 - e^{\lfloor -r / 2 \rfloor}}
  \right) \right\rceil \right)\enspace.
  \]
  The value of $t$ can be (pre-)computed by a simple search (see
  Table~\ref{tab:t}).
%
%
\end{proof}

\begin{note}[Leap-frogging]\label{sec:approxmc:enhancements:loopcycles}
We observe that every repetition of \core\ begins the search for the
proper number of buckets with $m = 1$. It is natural to ask if the
repeated search can be abridged. In~\cite{ChakrabortyMV13}, a heuristic
called \emph{leap-frogging} is proposed. Leap-frogging tracks the
successful values of~$m$ (i.e., the ones upon termination) as \core\ is
repeated. After a short stabilization period, subsequent runs of \core\
begin the search not with $m=1$ but with the minimum of the successful
values observed so far. The authors report that leap-frogging is
successful in practice. We choose to abstain from it nonetheless, as
leap-frogging nullifies all soundness guarantees.

A sound optimization is possible if one knows a lower bound~$L$ on the
number of models of~$\phi$ resp.~$\restr{\phi}{\Delta}$. In this case,
it is sound to start the search with $m=\lceil \log_2((1 + \epsilon)
\cdot L / \pivot) \rceil$ as per proof of
Theorem~\ref{theorem:guarantee}.
\end{note}

\subsection{Counting with Projection}
\label{sec:proj}

To show that \ourmc works properly for $\Delta\subset\Sigma$, we
show that the result in this case is the same as computing~$\phidelta$
in some other way and then applying \ourmc as a non-projecting
counter (i.e., as discussed so far). We begin with a lemma.

\begin{lemma}\label{lemma:projswap}
If $h$ only contains vocabulary from~$\Delta$ (and constants), then
$(\restr{\phi}{\Delta})_h\equiv\restr{\varphi_h}{\Delta}$.
\end{lemma}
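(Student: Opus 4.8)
The plan is to prove the semantic equivalence $(\restr{\phi}{\Delta})_h \equiv \restr{\phi_h}{\Delta}$ by showing the two $\Delta$-formulas have the same models, i.e., $\Md{\Delta}{(\restr{\phi}{\Delta})_h} = \Md{\Delta}{\restr{\phi_h}{\Delta}}$. The key observation that makes this work is that $h$ mentions only variables in $\Delta$, so the clause representation of the predicate $h(\vec z)=1^m$ is itself a $\Delta$-formula; call it $\psi_h$. Then by Construction~\ref{cons:phih}, $\phi_h = \phi \wedge \psi_h$ and $(\restr{\phi}{\Delta})_h = \restr{\phi}{\Delta} \wedge \psi_h$.

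First I would unfold the right-hand side using the model characterization of projection given in the Logical Foundations: $\Md{\Delta}{\restr{\phi_h}{\Delta}} = \{\restr{M}{\Delta} \mid M \in \Md{\Sigma}{\phi \wedge \psi_h}\}$. A $\Sigma$-model $M$ satisfies $\phi \wedge \psi_h$ iff $M \models \phi$ and $M \models \psi_h$; since $\psi_h$ is a $\Delta$-formula, $M \models \psi_h$ depends only on $\restr{M}{\Delta}$. Hence $\{\restr{M}{\Delta} \mid M \models \phi \wedge \psi_h\} = \{\restr{M}{\Delta} \mid M \models \phi\} \cap \Md{\Delta}{\psi_h}$, the intersection being justified because for any $\Delta$-model $N$ satisfying $\psi_h$ and lying in the projection of $\Md{\Sigma}{\phi}$, we can pick a witnessing $M \models \phi$ with $\restr{M}{\Delta} = N$, and that same $M$ then satisfies $\psi_h$ as well. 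By the projection identity again, $\{\restr{M}{\Delta} \mid M \models \phi\} = \Md{\Delta}{\restr{\phi}{\Delta}}$, so the right-hand side equals $\Md{\Delta}{\restr{\phi}{\Delta}} \cap \Md{\Delta}{\psi_h} = \Md{\Delta}{\restr{\phi}{\Delta} \wedge \psi_h} = \Md{\Delta}{(\restr{\phi}{\Delta})_h}$, which is the left-hand side.

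The main obstacle — really the only subtle point — is the step where a $\Delta$-conjunct is pulled out of the projection, i.e., that $\restr{(\phi \wedge \psi)}{\Delta} \equiv \restr{\phi}{\Delta} \wedge \psi$ when $\psi$ is over $\Delta$. One inclusion ($\restr{(\phi\wedge\psi)}{\Delta} \models \restr{\phi}{\Delta} \wedge \psi$) is immediate since $\phi\wedge\psi$ entails both $\phi$ (hence $\restr{\phi}{\Delta}$, which is the strongest $\Delta$-consequence) and $\psi$. The reverse direction is where the hypothesis on $h$ is essential: I would argue at the level of models, using that every model of $\restr{\phi}{\Delta}$ extends to a model of $\phi$ over $\Sigma$, and that such an extension can be chosen to agree with the given $\Delta$-model on $\Delta$, so if the $\Delta$-model also satisfies $\psi$ the extension satisfies $\phi \wedge \psi$. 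Care is needed here only to invoke the projection characterization $\Md{\Delta}{\restr{\phi}{\Delta}} = \{\restr{M}{\Delta} \mid M \in \Md{\Sigma}{\phi}\}$ from the Logical Foundations rather than manipulating the quantifier-elimination definition directly. Everything else is bookkeeping with the definitions of $\phi_h$ and $\restr{\cdot}{\Delta}$.
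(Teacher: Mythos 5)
Your proof is correct, and its skeleton is the same as the paper's: both reduce the claim to pulling the $\Delta$-conjunct $h$ out of the projection. The paper does this as a two-line equivalence chain, $(\restr{\phi}{\Delta})_h\equiv \restr{\phi}{\Delta}\wedge h\equiv \restr{\phi}{\Delta}\wedge\restr{h}{\Delta}\equiv \restr{(\phi\wedge h)}{\Delta}\equiv\restr{\phi_h}{\Delta}$, justified by ``$\restr{h}{\Delta}=h$'' and the assertion that projection distributes over conjunction, labelled elementary. The genuine difference is that you prove the one step the paper waves at, and you are right to flag it as the only subtle point: read as a general principle, distributivity of projection over conjunction is false --- only the entailment $\restr{(\phi\wedge\psi)}{\Delta}\models\restr{\phi}{\Delta}\wedge\restr{\psi}{\Delta}$ holds unconditionally (e.g.\ $\Sigma=\{x,y\}$, $\Delta=\{y\}$, $\phi=x$, $\psi=\neg x$ gives an unsatisfiable left side but a valid right side). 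The converse needs exactly the hypothesis that one conjunct is a $\Delta$-formula, and your witness-extension argument via $\Md{\Delta}{\restr{\phi}{\Delta}}=\{\restr{M}{\Delta}\mid M\in\Md{\Sigma}{\phi}\}$ --- choose $M\models\phi$ with $\restr{M}{\Delta}=N$ and observe that $M\models\psi_h$ because $\psi_h$ depends only on $\Delta$ --- is precisely the missing justification. So your semantic, model-level proof is longer but strictly more rigorous; what the paper's phrasing buys is brevity at the cost of citing a principle that is only valid in the special case it is applied to.
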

\begin{proof}
First, since $h$ only contains vocabulary from~$\Delta$,
$\restr{h}{\Delta}=h$. Second, projection distributes over conjunction
(elementary). Together:
$(\restr{\phi}{\Delta})_h\equiv
\restr{\phi}{\Delta}\wedge h\equiv
\restr{\phi}{\Delta}\wedge \restr{h}{\Delta}\equiv
\restr{(\varphi\wedge h)}{\Delta}\equiv
\restr{\varphi_h}{\Delta}$.
\end{proof}

We are now interested to establish result equality in the following two
executions:
\begin{equation}\label{eq:proji}
\core(r,\phi,\Delta,\epsilon)=
\core(r,\restr{\phi}{\Delta},\Delta,\epsilon)\enspace.
\end{equation}
We note that since the first, third, and fourth parameters are identical
in both invocations, it is sound to assume the same random choices
of~$h$ in both executions.  Thus, the validity of~\eqref{eq:proji}
reduces to the following equality (assuming an arbitrary~$h$ that can be
chosen by the algorithm):
\begin{equation}\label{eq:projii}
\bssat(\phi_h,\Delta,n)=
\bssat((\restr{\phi}{\Delta})_h,\Delta,n)
\end{equation}
This equality follows from Lemma~\ref{lemma:projswap}, the functionality
of \bssat, the observation that
$\vocab{\restr{\phi}{\Delta}}\subseteq\Delta$ (by definition of
projection), and $\vocab{h}\subseteq\Delta$ (by third
parameter when invoking \core).

Observe that in the execution on the right, all invocations are
non-projecting, if one considers $\Sigma=\Delta$, and are thus correct
according to the previous proofs.


\section{Implementation and Evaluation}

\subsection{Changes and Improvements over the Original \origmc}
\label{sec:modif}

\begin{wrapfigure}{r}{0.52\textwidth}
\vspace{-6ex}
  \begin{center}
    \includegraphics[scale=0.43]{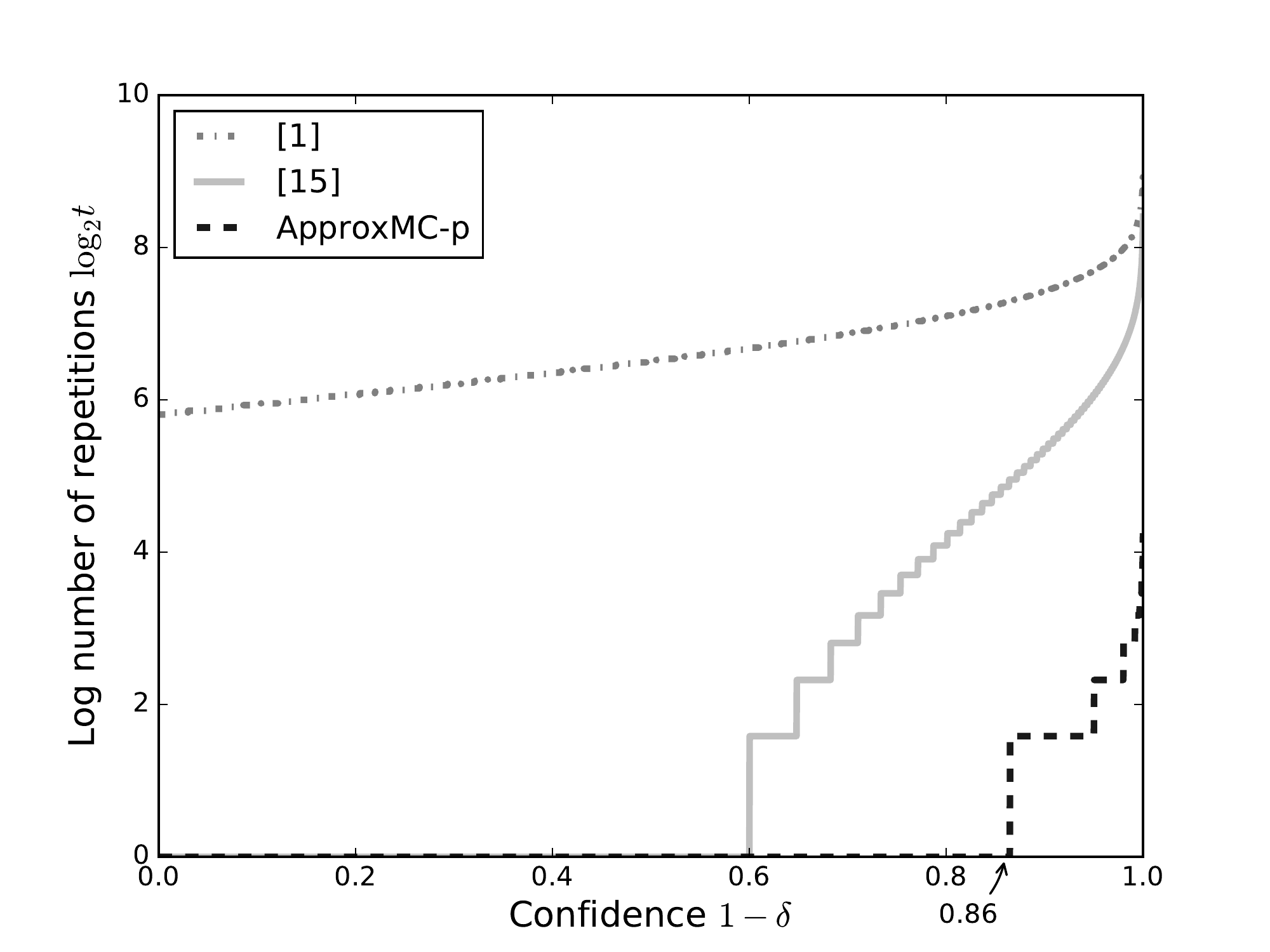}
  \end{center}

  \small{
  The upper curve visualizes $t(1 - \delta) = \lceil 35 \cdot \log_2(3
  / \delta) \rceil$ as used in~\cite{ChakrabortyMV13}. The middle one is $t(1 - \delta) =
  \min\left(\left\{ n \in \mathbb{N}: \delta \geq \sum_{k = \lceil n / 2
  \rceil}^{n} \binom{n}{k} \cdot (2/5)^{k} \cdot (1 - (2/5))^{n - k}
  \right\}\right)$ as used in the master
  thesis~\cite{meel2014sampling} connected to
  the publication~\cite{ChakrabortyMV13}. The lower curve is the
  function in Algorithm~\ref{alg:approxmc}, Line~\ref{alg:approxmc_t}
  for $r=3$.
  }
  \caption{Comparison of required number of repetitions~$t$ of \core\
  (logarithmic scale) for desired level of confidence~$1-\delta$ }
  \label{fig:comp_req_iterations}
\end{wrapfigure}

In comparison to the original \origmc~\cite{ChakrabortyMV13}, following
differences are of note:

\textbf{Elimination of $ \bot $.} The original versions of \core\ and
  \main\ potentially returned an error value $\bot$ instead of
  an estimation. This distinction has not been exploited to achieve
  smaller values for $ \pivot $ and $ t $ and hence has been dropped
  here for more compact and more understandable proofs.

\textbf{Test $|\phi| \leq \pivot$ hoisted out of the inner loop.} The
  result of this test does not change when repeating \core, so it has
  been moved to \main.

\textbf{Smaller values for $\pivot$.} We have performed a more exact
  estimation of the needed value of $\pivot$. Our value of $\pivot$, as
  defined in Algorithm~\ref{alg:approxmc},
  Line~\ref{alg:approxmc_pivot}, is \emph{at least} $15\%$ smaller than
  in~\cite{ChakrabortyMV13} (cf.~Table~\ref{tab:pivot}).


\textbf{Increased base confidence (fewer repetitions).} The number of
  repetitions needed for the demanded confidence has been substantially
  reduced (Table~\ref{tab:t} and Figure~\ref{fig:comp_req_iterations}).
  The reasons for this are twofold. First, in~\cite{ChakrabortyMV13},
  the Chernoff-Hoeffding inequality is missing the floor operator,
  unnecessarily reducing the base confidence. Second,
  in~\cite{ChakrabortyMV13}, the ``successful'' termination of the loop
  in \core\ and the tolerable quality of the achieved estimation are
  treated as independent events and their probabilities multiplied. We
  show in Theorem~\ref{theorem:guarantee} that the latter actually
  entails the former.

\begin{table}
  \caption{Sample values for $\pivot$ depending on the tolerance}\smallskip
  \label{tab:pivot}
  \centering\small
  \begin{tabular}{l*{9}{r}}
    \toprule
    & \multicolumn{9}{c}{tolerance ($ \epsilon $)}\\
    \cmidrule(lr){2-10}
    & 0.75 & 0.5 & 0.25 & 0.1 & 0.05 & 0.03 & 0.01 & 0.005 & 0.001 \\
    \cmidrule(lr){2-2}
    \cmidrule(lr){3-3}
    \cmidrule(lr){4-4}
    \cmidrule(lr){5-5}
    \cmidrule(lr){6-6}
    \cmidrule(lr){7-7}
    \cmidrule(lr){8-8}
    \cmidrule(lr){9-9}
    \cmidrule(lr){10-10}
    $ \pivot $ in~\cite{ChakrabortyMV13} & 54 & 90 & 248 & 1198 & 4364 & 11662
    & 100912 & 399660 & 9912124 \\
    $ \pivot $ now & 27 & 51 & 168 & 922 & 3517 & 9584 & 84575 & 336622 & 8382049\\
    \bottomrule
  \end{tabular}
\end{table}

\begin{table}
  \caption{Sample values for $t$ depending on the confidence}\smallskip
  \label{tab:t}
  \centering\small
  \begin{tabular}{l*{9}{r}}
    \toprule
    & \multicolumn{9}{c}{confidence ($1 - \delta$)}\\
    \cmidrule(lr){2-10}
    & 0.5 & 0.6 & 0.7 & 0.8 & 0.9 & 0.95 & 0.99 & 0.999 & 0.9999 \\
    \cmidrule(lr){2-2}
    \cmidrule(lr){3-3}
    \cmidrule(lr){4-4}
    \cmidrule(lr){5-5}
    \cmidrule(lr){6-6}
    \cmidrule(lr){7-7}
    \cmidrule(lr){8-8}
    \cmidrule(lr){9-9}
    \cmidrule(lr){10-10}
    $ t $ in~\cite{ChakrabortyMV13} & 91 & 102 & 117 & 137 & 172 & 207 & 289 & 405 & 521 \\
    $ t $ in~\cite{meel2014sampling} & 1 & 1 & 7 & 17 & 41 & 67
    & 133 & 235 & 339 \\
    $ t $ now & 1 & 1 & 1 & 1 & 3 & 3 & 7 & 13 & 19\\
    \bottomrule
  \end{tabular}
\end{table}

\subsection{Implementation}\label{sec:impl}

We have implemented \ourmc
 and make our implementation available.\footnote{%
\url{http://formal.iti.kit.edu/ApproxMC-p/}}  Algorithms
\core\ and \main\ are implemented in Python, while different
implementations of \bssat\ can be plugged in. As the source of random
bits we use Python's Mersenne Twister PRNG, seeded with
entropy obtained from \texttt{/dev/urandom} (Linux kernel PRNG seeded
from hardware noise). We have chosen a pseudo-random number generator
instead of a true randomness source (e.g., the HotBits service used
in~\cite{ChakrabortyMV13}) for reasons of reproducibility and ease of
benchmarking. By reusing the random seed it is possible to reproduce
results and also benchmark different implementations of \bssat.

Our principal \bssat\ implementation is based on
\cryptominisatv~\cite{soos2009extending}, which is a recent successor to
\cryptominisativ. \cryptominisatv has built-in support for model
enumeration and efficient XOR-SAT solving by using the Gauss-Jordan
elimination as an inprocessing step\footnote{Please note that
Gauss-Jordan elimination has to be turned on both during compilation and
runtime.}. We modified the tool  to support projection by shortening the
blocking clauses used for model enumeration.

It is also possible to use other model enumerators such as \sharpcdcl or
CLASP~\cite{gebser2007clasp}. However, these tools are not designed with
native support for XORs. As such they cannot directly parse XOR-CNF
inputs, an issue that we work around by translating XORs into CNF using
Tseitin encoding. The tools do not employ Gauss-Jordan elimination
either, which makes them not competitive against \cryptominisatv in our
scenario. In the future, it would be interesting to benchmark these
enumerators combined with Gauss-Jordan elimination as a preprocessing
step, though they would still lack inprocessing in the style of
\cryptominisatv.

\subsection{Experiments}


An off-the-shelf program verification system together with projection
and counting components is sufficient to implement an analysis
quantifying information flow in programs (QIF)~\cite{Klebanov14}. The
measured number of reachable final program states corresponds to the
min-entropy leakage resp.\ min-capacity of the program~\cite{Smith15}
and can be used as a measure of confidentiality or integrity. For
generating verification conditions we are using the bounded model
checker CBMC~\cite{cbmc}, for projection and counting \ourmc. The
experiments were performed on a machine with an Intel Core i7 860
2.80GHz CPU (same machine as in~\cite{KlebanovMM13}). \ourmc was
configured with the default tolerance $\epsilon=0.5$ and confidence
$1-\delta=0.86$.

\paragraph{Synthetic QIF benchmarks.}

In~\cite{meng2011calculating}, the authors describe a series of QIF
benchmarks, which have become quite popular since then. As was already
noted in~\cite{KlebanovMM13}, the majority of the benchmarks are too
easy in the meantime. We use two scaled-up benchmark instances, which
could not be solved in~\cite{KlebanovMM13} without help of a dedicated SAT
preprocessor~\cite{Manthey12}.

In both benchmarks, the size of the projection scope $|\Delta|=32$, and
$|\phidelta|=2^{32}$. The first benchmark, sum-three-32, contains 639
variables and 1708  clauses. The average run time of \ourmc was
1.2s. The average run time for bin-search-32 (4473 variables,
14011 clauses) was 5.2s. 

\paragraph{Quantifying information flow in PRNGs.}

In \cite{DoerreKlebanov16}, we present an information flow analysis
aimed at detecting a certain class of errors in pseudo-random number
generators (PRNGs). An error is present when the information flow from a
seed of $M$ bytes to an $N$-byte chunk of output is not maximal.
In~\cite{DoerreKlebanov16} we detect such deviations from maximality but
do not obtain a quantitative measure of the flow. While quantifying the
flows needed for practical application in the domain (at least 20~bytes)
is still not feasible, this scenario provides a scalable benchmark.

Here, we are quantifying the flow through the OpenSSL PRNG with
cryptographic primitives replaced by idealizations. For $M=N=10$, the
59-megabyte formula contains 590 thousands variables and 2.5 millions
clauses. \ourmc counts all $2^{80}$ models in 10.5 minutes (631.7s on
average), which is beyond the capabilities of any other counting tool
known to us. The largest flow we could measure in this benchmark was at
15~bytes (measured in a single experiment over the course of 47~hours),
while reaching the count of 14~bytes in the same experiment took only
roughly 32~minutes.

\section{Conclusion}

The experiments show that \ourmc can effectively and efficiently
estimate model counts of projected formulas that are not
amenable to other counting tools. At the same time, \ourmc, like any
tool, has its own particular pragmatical properties, which need to be
carefully considered when choosing a tool for an application.

First, \ourmc offers no approximation advantage for formulas with few
models. For instance, at tolerance level $\epsilon=0.1$, formulas with
fewer than 922 models are counted exactly (cf.~Table~\ref{tab:pivot}).
On the other hand, there is no penalty for these formulas either, as
\ourmc then simply behaves as \bssat, which offers the best pragmatics
for this class of formulas. For formulas with model counts larger but
still comparable with \pivot, \ourmc will perform more SAT queries than
\bssat, due to search for the proper number of buckets and experiment
repetition at confidence values over~$0.86$ (base confidence). We also
note that performance of \ourmc does not increase by lowering the
confidence under~$0.86$.

Concerning enumeration performance, \cryptominisatv is currently the
best overall implementation of \bssat\ due to its built-in support for
XORs. Yet, beside Gauss-Jordan elimination, there are various other
factors influencing performance (if to a lesser degree): non-XOR solver
performance, enumeration algorithm, solver preprocessor, etc. To better
understand the individual contributions of these factors, much more
benchmarking and investigation is needed.

Finally, \ourmc, in general, does not make larger formulas amenable
to counting, merely formulas with more models. For QIF analyses, this
means that \ourmc is attractive for quantifying confidentiality in
systems with large secrets, as acceptable information leakage is coupled
to the secret size. Alternatively, \ourmc can be used for quantifying
integrity and related properties.


\paragraph{Acknowledgment.}

This work was in part supported by the German National Science
Foundation (DFG) under the priority programme 1496 ``Reliably Secure
Software Systems -- RS3.'' The authors would like to thank Mate Soos for
very helpful feedback concerning Cryptominisat.


\bibliographystyle{eptcs}
\bibliography{references}
\appendix
\section{Detailed Proofs}

\begin{proof}[Proof of Corollary~\ref{corollary:1}]
  \begin{align*}
    \Pr[|\Gamma - \mu| \geq \epsilon \cdot \mu] \leq e^{\lfloor -r / 2 \rfloor}
    &\iff P [|\Gamma - \mu| \leq \epsilon \cdot \mu] \geq 1 - e^{\lfloor -r / 2 \rfloor} \\
    &\iff P [\Gamma \in [\mu - \epsilon \cdot \mu, \mu + \epsilon \cdot \mu]]
    \geq 1 - e^{\lfloor -r / 2 \rfloor} \\
    &\iff \Pr[(1 - \epsilon) \cdot \mu \leq \Gamma \leq (1 + \epsilon) \cdot \mu]
    \geq 1 - e^{\lfloor -r / 2 \rfloor}
  \end{align*}
\end{proof}

\begin{proof}[Proof of Lemma~\ref{thm:r-univ}]
~
\begin{enumerate}
\item (Measurable) function application preserves independence.
\item We will prove that strong $r$-universality implies strict
$r-1$-universality. The claim of the theorem corresponds to strict
$1$-universality.

Assuming, there are at least~$r$ distinct keys in~$K$ and
that $B=\{w_1,\ldots,w_{|B|}\}$:
\begin{align*}
\Pr_{h\in H}\big[\bigwedge_{i=1}^{r-1}h(k_i)=v_i\big]=
&&\text{\small existence of $h(k_r)$}\\
\Pr_{h\in H}\big[\bigwedge_{i=1}^{r-1}h(k_i)=v_i\wedge
\bigvee_{j=1}^{|B|}h(k_r)=w_j\big]=
&&\text{\small distributivity}\\
\Pr_{h\in H}\big[\bigvee_{j=1}^{|B|}(\bigwedge_{i=1}^{r-1}h(k_i)=v_i\wedge
h(k_r)=w_j)\big]=
&&\text{\small orthogonal disjuncts}\\
\sum_{j=1}^{|B|}\Pr_{h\in H}\big[\bigwedge_{i=1}^{r-1}h(k_i)=v_i\wedge
h(k_r)=w_j\big]=
&&\text{\small strong $r$-universality}\\
|B|\cdot\left(\frac{1}{|B|}\right)^r=
\left(\frac{1}{|B|}\right)^{r-1}\enspace.
\end{align*}
\end{enumerate}
\end{proof}

%
%
%

\begin{proof}[Proof of Lemma~\ref{lemma:ex}]
  \begin{align*}
    &1 + \log_2 \left((1 + \epsilon) \cdot \frac{|\phi|}{\pivot} \right)
    \leq \log_2 \left(\frac{|\phi| \cdot \epsilon^2}{r \cdot \sqrt[3]{e}} \right) \\
    \iff& \log_2(2) + \log_2 \left((1 + \epsilon) \cdot \frac{|\phi|}{pivot} \right)
    \leq \log_2 \left(\frac{|\phi| \cdot \epsilon^2}{r \cdot \sqrt[3]{e}} \right) \\
    \iff& \log_2 \left(2 \cdot (1 + \epsilon) \cdot \frac{|\phi|}{pivot} \right)
    \leq \log_2 \left(\frac{|\phi| \cdot \epsilon^2}{r \cdot \sqrt[3]{e}} \right) \\
    \iff& 2 \cdot (1 + \epsilon) \cdot \frac{|\phi|}{pivot}
    \leq \frac{|\phi| \cdot \epsilon^2}{r \cdot \sqrt[3]{e}} \\
    \iff& \left\lceil \frac{2 \cdot r \cdot (1 + \epsilon)
        \cdot \sqrt[3]{e}}{\epsilon^2} \right\rceil \leq pivot
  \end{align*}
\end{proof}

\begin{proof}[Proof of Lemma~\ref{lemma:coin_tosses}]
  Lemma~\ref{lemma:coin_tosses}~\ref{lemma:coin_tosses_1} and~\ref{lemma:coin_tosses_3} are widely
  known and~\ref{lemma:coin_tosses_2} is trivial with~\ref{lemma:coin_tosses_1}. Therefore
  only~\ref{lemma:coin_tosses_4} will be proved.

  Firstly, because for all $k\in\{\lceil n / 2 \rceil, \ldots, n \} $ it is true that $
  \binom{n}{k}\leqslant2^n $ it holds:
  \begin{align}
    \label{eq:coin_tosses_4_1}
    \sum_{k = \lceil n / 2 \rceil}^n \binom{n}{k} \cdot p^k \cdot (1 - p)^{n - k}
    &= (1 - p)^n \cdot \left( \sum_{k = \lceil n / 2 \rceil}^n \binom{n}{k}
    \cdot \left( \frac{p}{1 - p} \right)^k \right) \notag \\
    &\leq (1 - p)^n \cdot 2^n
    \cdot \left( \sum_{k = \lceil n / 2 \rceil}^n \cdot \left( \frac{p}{1 - p} \right)^k \right)
  \end{align}

  Secondly, due to the restriction of $ p $ to be in $ [0, 1/2] $ the value $ (p / (1 - p))^{\lceil
    n / 2 \rceil} $ is smaller than or equal to $ (p / (1 - p))^{n / 2} $. Hence it applies:
  \begin{equation}
    \label{eq:coin_tosses_4_2}
    \sum_{k = t}^n \left( \frac{p}{1 - p} \right)^k
    \leq \frac{1 - p}{1 - 2 \cdot p} \cdot \left( \sqrt{\frac{p}{1 - p}} \right)^n
  \end{equation}
  The following estimation shows that:
  \begin{align*}
    \sum_{k = \lceil n / 2 \rceil}^n \left( \frac{p}{1 - p} \right)^k
    &= \sum_{k = 0}^n \left( \frac{p}{1 - p} \right)^k
    - \left( \sum_{k = 0}^{\lceil n / 2 \rceil - 1} \left( \frac{p}{1 - p} \right)^k \right) \\
    &\overset{\text{\ref{lemma:coin_tosses_3}}}{=}
    \frac{1 - \left(\frac{p}{1 - p} \right)^{n + 1}}{1 - \frac{p}{1 - p}}
    - \frac{1 - \left(\frac{p}{1 - p} \right)^{\lceil n / 2 \rceil}}{1 - \frac{p}{1 - p}} \\
    &= \frac{1 - p}{1 - 2 \cdot p}
    \cdot \left( \left( \frac{p}{1 - p} \right)^{\lceil n / 2 \rceil}
    - \left( \frac{p}{1 - p} \right)^{n + 1} \right) \\
    &\leq \frac{1 - p}{1 - 2 \cdot p} \cdot \left( \frac{p}{1 - p} \right)^{\lceil n / 2 \rceil} \\
    &= \frac{1 - p}{1 - 2 \cdot p} \cdot \left( \sqrt{\frac{p}{1 - p}} \right)^n
  \end{align*}

  If Equation~\ref{eq:coin_tosses_4_2} is inserted into Equation~\ref{eq:coin_tosses_4_1} you get:
  \[
  \frac{1 - p}{1 - 2 \cdot p}
  \cdot \underbrace{2^n \cdot (1 - p)^n
    \cdot \left( \sqrt{\frac{p}{1 - p}} \right)^n}_{= \left(\sqrt{4 \cdot p \cdot (1 - p)} \right)^n}
  \]
\end{proof}

%

\end{document}